\newtheorem{thm}{Theorem}
\newtheorem{lem}{Lemma}
\newtheorem{rem}{Remark}
\begin{document}

\title{Convergence Guarantees for Non-Convex Optimisation with Cauchy-Based Penalties}

\author{Oktay~Karakuş,~\IEEEmembership{Member,~IEEE,}
Perla~Mayo, Alin~Achim,~\IEEEmembership{Senior Member,~IEEE}
        \thanks{This work was supported in part by the UK Engineering and Physical Sciences Research Council (EPSRC) under grant EP/R009260/1 (AssenSAR), in part by a CONACyT PhD studentship under grant 461322 to Mayo, and in part by a Leverhulme Trust Research Fellowship to Achim (INFHER).}
        \thanks{Oktay Karakuş, Perla Mayo and Alin Achim are with the Visual Information Laboratory, University of Bristol, Bristol BS1 5DD, U.K. (e-mail: o.karakus@bristol.ac.uk; pm15334@bristol.ac.uk; alin.achim@britol.ac.uk)}%<-this % stops a space
}

%\markboth{IEEE Transactions on Signal Processing, Vol. xx, No. x, 2020}
%{Shell \MakeLowercase{\textit{et al.}}: Bare Demo of IEEEtran.cls for IEEE Journals}
\maketitle

\begin{abstract}
In this paper, we propose a proximal splitting methodology with a non-convex penalty function based on the heavy-tailed Cauchy distribution. We first suggest a closed-form expression for calculating the proximal operator of the Cauchy prior, which then makes it applicable in generic proximal splitting algorithms. We further derive the condition required for guaranteed convergence to a solution in optimisation problems involving the Cauchy based penalty function. Setting the system parameters by satisfying the proposed condition ensures convergence even though the overall cost function is non-convex, when minimisation is performed via a proximal splitting algorithm. The proposed method based on Cauchy regularisation is evaluated by solving generic signal processing examples, i.e. 1D signal denoising in the frequency domain, two image reconstruction tasks including de-blurring and denoising, and error recovery in a multiple-antenna communication system. We experimentally verify the proposed convergence conditions for various cases, and show the effectiveness of the proposed Cauchy based non-convex penalty function over state-of-the-art penalty functions such as $L_1$ and total variation ($TV$) norms.
\end{abstract}

\begin{IEEEkeywords}
Non-convex regularisation; Convex optimisation; Cauchy proximal operator; Inverse problems; Image reconstruction; MIMO error recovery.
\end{IEEEkeywords}

\IEEEpeerreviewmaketitle

\section{Introduction}
\IEEEPARstart{T}{he} problem of estimating unknown physical properties directly from observations (e.g. measurements, data) arises in almost all signal/image processing applications. Problems of this kind are referred to as inverse problems, since having the observations and the forward-model between the observations and the sources is generally not enough to obtain solutions to these problems directly, due to their ill-posed nature.

Indeed, unlike the forward-model which is well-posed every time (cf. Hadamard \cite{hadamard2003lectures}), inverse problems are generally ill-posed \cite{zhu2011inverse}. Therefore, dealing with the prior knowledge about the object of interest plays a crucial role in reaching a stable/unique solution. This leads to regularisation based methods, which received great attention hitherto in the literature \cite{aggarwal2006line, cetin2014sparsity, selesnick1, selesnick2017total, karakucs2019ship2, zuo2013generalized, mohammad2012bayesian, anantrasirichai2017line}.

In most of these examples, the common choice of regularisation functions is based on the $L_1$ norm, due to its convexity and capability to induce sparsity effectively. Another important example of a convex regularisation function is the total variation ($TV$) norm. It constitutes the state-of-the-art in denoising applications, due to its efficiency in smoothing. Despite their common usage, the $L_1$ norm penalty tends to underestimate high-amplitude/intensity values, whilst $TV$ tends to over-smooth the data and may lead to loss of details. Non-convex penalty functions can generally lead to better and more accurate estimations \cite{selesnick2, nikolova2005analysis, chen2014convergence} when compared to $L_1$, $TV$, or some other convex penalty functions. Notwithstanding this, due to the non-convexity of the penalty functions, the overall cost function becomes non-convex, which implies a multitude of sub-optimal local minima.

Convexity preserving non-convex penalty functions are thus essential, the idea having been successfully applied by Blake, Zimmerman \cite{blake1987visual}, and Nikolova \cite{nikolova2010fast}, and further developed in \cite{selesnick2017total,selesnick2,parekh2015convex,lanza2016convex,malek2016class,selesnick2020non,anantrasirichai2020image}. Specifically, a convex denoising scheme is proposed with tight frame regularisation in \cite{parekh2015convex}, whilst \cite{lanza2016convex} proposes the use of parameterised non-convex regularisers to effectively induce sparsity of the gradient magnitudes. In \cite{selesnick2017total}, the Moreau envelope is used for TV denoising in order to preserve the convexity of a TV denoising cost function. The non-convex generalised minimax concave (GMC) penalty function is proposed in \cite{selesnick2} for convex optimisation problems.

Another important reason behind the appeal of the aforementioned penalty functions in applications, is the existence of closed-form expressions for their proximal operators. Specifically, the proximal operator of a regularisation function has been introduced in conjunction with inverse problems, to help solving various signal processing tasks. Proximal operators are powerful and flexible tools with attractive properties, which enable solutions to non-differentiable optimisation problems, and make them suitable for iterative minimisation algorithms \cite{combettes2011proximal}, such as forward-backward (FB), or the alternating direction method of multipliers (ADMM). Remarkably, many widespread regularisation functions have corresponding proximal operators available in closed form, or at least numerical methods to calculate them exist. For example, the soft thresholding function is the proximal operator for the $L_1$ norm, whereas the proximal operator is the generalised soft thresholding (GST) \cite{zuo2013generalized} for $L_p$ norm penalty function. It is efficiently computed by using Chambolle's method \cite{chambolle2004algorithm} for $TV$ norm, whilst the GMC penalty only necessitates the use of soft-thresholding, or firm-thresholding in the case of diagonal forward operator $\mathcal{A}$ as shown in \cite{selesnick2}.

The quest for finding the most appropriate penalty function, eventually in relation to an explicit prior distribution characterising the data statistics, is far from being over. In this work, we consider the Cauchy distribution, a special member of the $\alpha$-stable distribution family, which is known for its ability to model heavy-tailed data in various signal processing applications. As a prior in image processing applications, it behaves as a sparsity-enforcing  one, similar to $L_1$ and $L_p$ norms \cite{mohammad2012bayesian}. It has already been used in denoising applications by modelling sub-band coefficients in transform domains \cite{achim2004image, bhuiyan2007spatially, chen2008wavelet, ranjani2010dual, gao2013directionlet}. Moreover, the Cauchy distribution was also used as a noise model in image processing applications, by employing it for the data fidelity term in combination with quadratic \cite{sciacchitano2015variational} and TV norm \cite{mei2018cauchy} based penalty terms. %Note that works in \cite{sciacchitano2015variational,mei2018cauchy} preserve convexity of the overall cost function while addressing the Cauchy distribution as the data fidelity term rather than a penalty term as is done in this paper.

The general approach involves the use of a variational Bayesian methodology to solve Cauchy regularised inverse problems due to its lack of a closed-form proximal operator. This prevents the Cauchy prior from being used in proximal splitting algorithms such as the FB, and ADMM. Moreover, having a proximal operator would also make the Cauchy based regularisation function applicable in advanced Bayesian signal/image processing methods, such as in uncertainty quantification (UQ) via proximal Markov Chain Monte Carlo ($p$-MCMC) algorithms \cite{cai2018uncertainty, durmus2018efficient}.

In this paper, we propose a proximal splitting methodology for solving inverse problems of the form
\begin{align}\label{equ:IP}
y = \mathcal{A}x + n,
\end{align}
where $y \in \mathbb{R}^M$ denotes the observation (can be either an image or some other kind of signals), $x \in \mathbb{R}^N$ is the unknown signal, which can also be referred to as target data (either an enhanced data or the raw data), $\mathcal{A} \in \mathbb{R}^{M\times N}$ is the forward model operator and $n \in \mathbb{R}^M$ represents the additive noise.
Specifically, we use a non-convex penalty function based on the Cauchy distribution, in order to capture the heavy-tailed and/or sparse characteristics of the target, $x$, with a number of original contributions, which include:
\begin{enumerate}
    \item deriving a closed form expression for the Cauchy proximal operator inspired by \cite{wan2011segmentation}, which makes Cauchy regularisation applicable in proximal splitting algorithms.
    \item deriving the condition that guarantees convergence of the Cauchy proximal operator to the global minimum. Even though the proposed Cauchy based penalty function is non-convex, the overall optimisation problem remains convergent either (i) through the use of proximal splitting algorithms, or (ii) through convexity of the cost function itself when the forward operator $\mathcal{A}$ satisfies the assumptions of orthogonality or of being an over-complete tight frame.
\end{enumerate}

We investigate the performance of the proposed Cauchy-based penalty function in comparison to $L_1$ and $TV$ norm penalty functions in three examples including (i) 1D signal denoising, (ii) 2D image restoration%including de-blurring and denoising
, and (iii) error recovery in multiple-input-multiple-output (MIMO) signal detection. Furthermore, we also study the effects of following/violating the proposed convergence conditions in those same examples. Please note that the present paper focuses on the theoretical aspects involving the use of the non-convex Cauchy-based penalty function in solving inverse problems and only presents three generic, but illustrative, signal/image processing examples. Further performance analysis of the proposed penalty function is investigated in \cite{karakucs2019cauchy2}, for four different synthetic aperture radar (SAR) imaging inverse problems. In addition, its merits in line artefacts quantification for lung ultrasound images of COVID-19 patients is investigated in \cite{karakucs2020covid19}. %The performance is evaluated in comparison to $L_1$, $TV$ and the GMC penalty functions in \cite{karakucs2019cauchy2}. Interested readers who seek detailed experimental analysis please refer to \cite{karakucs2019cauchy2}.

The rest of the paper is organised as follows: Section \ref{sec:SARIP} presents the proposed Cauchy proximal operator. Convergence analysis of the proposed method is given in Section \ref{sec:ConvAn} along with the corresponding Cauchy proximal splitting method. In Section \ref{sec:results}, the experimental validation on the proposed conditions, and an analysis on 1D, 2D and MIMO inverse problems are presented as well as a discussion on further applications in \cite{karakucs2019cauchy2,karakucs2020covid19}. We conclude our study and describe future work directions in Section \ref{sec:conc}.

%\section{Methodology}\label{sec:SARIP}
\section{The Cauchy Proximal Operator}\label{sec:SARIP}
Recalling the generic signal model in (\ref{equ:IP}), a stable solution to this ill-posed inverse problem is obtained through an optimisation of the following form:
\begin{align}\label{equ:IP2}
    \hat{x} = \arg \min_x \bigg\{F(x) = \Psi(y, \mathcal{A}x) + \psi(x) \bigg\}
\end{align}
where $F: \mathbb{R}^N \rightarrow \mathbb{R}$ is the cost function to be minimised, $\Psi: \mathbb{R}^N \rightarrow \mathbb{R}$ is a function which represents the data fidelity term and $\psi: \mathbb{R}^N \rightarrow \mathbb{R}$ is the regularisation function (the penalty term). Under the assumption of an independent and identically distributed (iid) Gaussian noise, the data fidelity term can be expressed as
\begin{align}\label{equ:LHD}
    \Psi(y, \mathcal{A}x) = \frac{\|y - \mathcal{A}x\|_2^2}{2\sigma^2}
\end{align}
where $\sigma$ refers to the standard deviation of the noise level. Based on a prior probability density function (pdf) $p(x)$, the problem of estimating $x$ from the noisy observation $y$ by using the signal model in (\ref{equ:IP}) turns into the following minimisation problem in a variational framework
\begin{align}\label{equ:mini1}
    \hat{x} = \arg\min_x \frac{\|y - \mathcal{A}x\|_2^2}{2\sigma^2} - \log p(x)
\end{align}
where we define the penalty function $\psi(x)$ as the negative logarithm of the prior distribution $-\log p(x)$. The selection of $\psi(x)$ (or equivalently $p(x)$) plays a crucial role in estimating $x$ in order to overcome the ill-posedness of the problem and to obtain a stable/unique solution. In the literature, depending on the application, the penalty term $\psi(x)$ has various forms, such as $L_1$, $L_2$, $TV$ or $L_p$ norms, to name but a few possible choices.

%In this paper, we investigate four different SAR imaging inverse problems which are the super-resolution, image formation/reconstruction, despeckling and ship wake detection. For some of these four examples, interested readers might find works which are based on the dictionary learning and/or deep neural networks. Please note that those approaches are out of the scope of this paper, since we propose a regularisation based methodology in SAR imaging inverse problems.

%In the sequel, details of SAR imaging inverse problem examples considered in this paper are discussed. Table \ref{tab:relation} presents the relationship between the generic inverse problem in (\ref{equ:IP}) and the ones which we express for each application in the following sub-sections.

In this study, we propose the use of a penalty function based on the Cauchy distribution, which is known to be heavy-tailed and promote sparsity in various applications. From a purely theoretical viewpoint, our preference for the Cauchy model over other candidate models stems from its membership of the $\alpha$-Stable family of distributions. Specifically, unlike other empirical distributions able to faithfully fit distributions with heavy-tails, $\alpha$-stable distributions are motivated by the generalised central limit theorem (CLT) similarly to the way Gaussian distributions are motivated by the classical CLT.
However, although the (symmetric) $\alpha$-stable density behaves approximately like a Gaussian
density near the origin, its tails decay at a lower rate than the Gaussian
density tails. Indeed, let $X$ be a non-Gaussian symmetric $\alpha$-Stable random variable. Then, as $x\rightarrow\infty$
\begin{align}
P(X>x) \; \sim \; c_\alpha x^{-\alpha}
\label{eq:tails}
\end{align}
where $c_\alpha = \Gamma(\alpha) (\sin\frac{\pi\alpha}{2})/\pi$,
$\Gamma(x)=\int_{0}^{\infty}t^{x-1}e^{-t}\,dt$ is the Gamma function,
and the statement $h(x)\sim g(x)$ as $x\rightarrow\infty$ means that
$\lim_{x\rightarrow\infty} h(x)/g(x) =1$. Hence, the
tail probabilities are asymptotically power laws. We should note that expression (\ref{eq:tails}) gives exactly
the tail probability of the Pareto distribution, and hence the term {\em ``stable
Paretian laws''} is sometimes used to distinguish between the fast decay of the
Gaussian law and the Pareto like tail behaviour when $\alpha<2$.

Contrary to the general $\alpha$-stable family, the Cauchy distribution has a closed-form probability density function, which is defined by
\begin{align} \label{equ:Cauchy}
    p(x) \propto \frac{\gamma}{\gamma^2+x^2}
\end{align}	
where $\gamma$ is the dispersion (scale) parameter, which controls the spread of the distribution. By replacing $p(x)$ in (\ref{equ:mini1}) with the Cauchy prior given in (\ref{equ:Cauchy}), we obtain the following optimisation problem
\begin{align}\label{equ:miniCauchy}
    \hat{x}_{\text{Cauchy}} = \arg\min_x \frac{\|y - \mathcal{A}x\|_2^2}{2\sigma^2} - \sum_{i,j}\log\left(\frac{\gamma}{\gamma^2+x_{ij}^2}\right).
\end{align}

%\section{Methodology}\label{sec:Cauchyprox}
%\subsection{The Cauchy Proximal Operator}
Using proximal splitting methods has numerous advantages when compared to classical methods. In particular, they (i) work under general conditions, e.g. for functions which are non-smooth and extended real-valued, (ii) generally have simple forms, so they are easy to derive and implement, (iii) can be used in large scale problems. In addition, most of the proximal splitting algorithms are generalisations of the classical approaches such as the projected gradient algorithm \cite{parikh2014proximal}.

In order to solve the minimisation problem in (\ref{equ:miniCauchy}) through efficient proximal algorithms such as forward-backward (FB) or the alternating direction of multipliers method (ADMM), \textit{the proximal operator} of the Cauchy regularisation function should be defined.
Proximal operators have been extensively used in solving inverse problems, whereby they can generally be computed efficiently using various algorithms for a given regularisation function, e.g. the soft thresholding function for $L_1$ norm, or Chambolle's method for the $TV$ norm \cite{chambolle2004algorithm}. Besides, $prox_h^{\mu}$ has similar properties to the gradient mapping operators, which point in the direction of the minimum of $h$.
%\subsection{Cauchy Proximal Operator}
Thus, for any function $h(\cdot)$ and $\mu>0$, the proximal operator, $prox_h^{\mu}:\mathbb{R}\rightarrow\mathbb{R}$ is defined as \cite{combettes2011proximal, parikh2014proximal}
\begin{align}\label{equ:prox}
    prox_h^{\mu}(x) = \arg\min_u \left\{h(u) + \|x - u \|_2^2 / 2\mu \right\}.
\end{align}
For a Cauchy based penalty function, we recall that the function $h(\cdot)$ is given by
\begin{align}\label{equ:neglogCuachy}
h(x) = -\log\left(\dfrac{\gamma}{\gamma^2+x^2}\right),
\end{align}
which implies the Cauchy proximal operator is
\begin{align}\label{equ:proxCauchy}
    prox_{Cauchy}^{\mu}(x) = \arg\min_u \left\{\frac{\|x - u \|_2^2}{2\mu} - \log\left(\frac{\gamma}{\gamma^2+u^2}\right) \right\}
\end{align}

The solution to this minimisation problem can be obtained by taking the first derivative of (\ref{equ:proxCauchy}) in terms of $u$ and setting it to zero. Hence we have
\begin{align}\label{equ:proxCauchy2}
    u^3-xu^2+(\gamma^2+2\mu)u-x\gamma^2 = 0.
\end{align}
Wan et al. \cite{wan2011segmentation} proposed a Bayesian maximum a-posteriori (MAP) solution to the problem of denoising a Cauchy signal in Gaussian noise, and referred to this solution as “Cauchy shrinkage”. Similarly, the minimisation problem in (\ref{equ:proxCauchy}) can be solved with the same approach as in \cite{wan2011segmentation}, using however a different parameterisation.

Hence, following \cite{wan2011segmentation}, the solution to the cubic function given in (\ref{equ:proxCauchy2}) can be obtained through Cardano’s method, which is given in Algorithm \ref{alg:proxCauchy}.

\begin{algorithm}[ht!]
\caption{The Cauchy Proximal Operator}\label{alg:proxCauchy}
\setstretch{1.2}
\begin{algorithmic}[1]
\Procedure{proxCauchy}{$x, \gamma, \mu$}
    \State $p \gets \gamma^2 + 2\mu - \frac{x^2}{3}$
    \State $q \gets x\gamma^2 + \frac{2x^3}{27} - \frac{x}{3}\left(\gamma^2 + 2\mu\right)$
    \State $s \gets \sqrt[3]{q/2 + \sqrt{p^3/27 + q^2/4}}$
    \State $t \gets \sqrt[3]{q/2 - \sqrt{p^3/27 + q^2/4}}$
    \State $z \gets \frac{x}{3} + s + t$ \Comment{Cauchy proximal operator result}
    \State \textbf{return} $z$
\EndProcedure
\end{algorithmic}
\end{algorithm}

Figure \ref{fig:normproxComp}-(a) depicts the behaviour of five different penalty functions. Among those, the proposed Cauchy based penalty is a non-convex function as are the $L_p$ and GMC penalties, and contrary to the $L_1$ and $L_2$ norms, which are convex. Figure \ref{fig:normproxComp}-(b) illustrates the behaviour of the Cauchy proximal operator compared to the soft ($L_1$), hard ($L_0$) and firm (GMC) thresholding functions. Examining Figure \ref{fig:normproxComp}-(b), the Cauchy proximal operator can be regarded as a compromise between soft and hard thresholding functions.

\begin{figure}[ht]
\centering
\subfigure[]{\includegraphics[width=.6\linewidth]{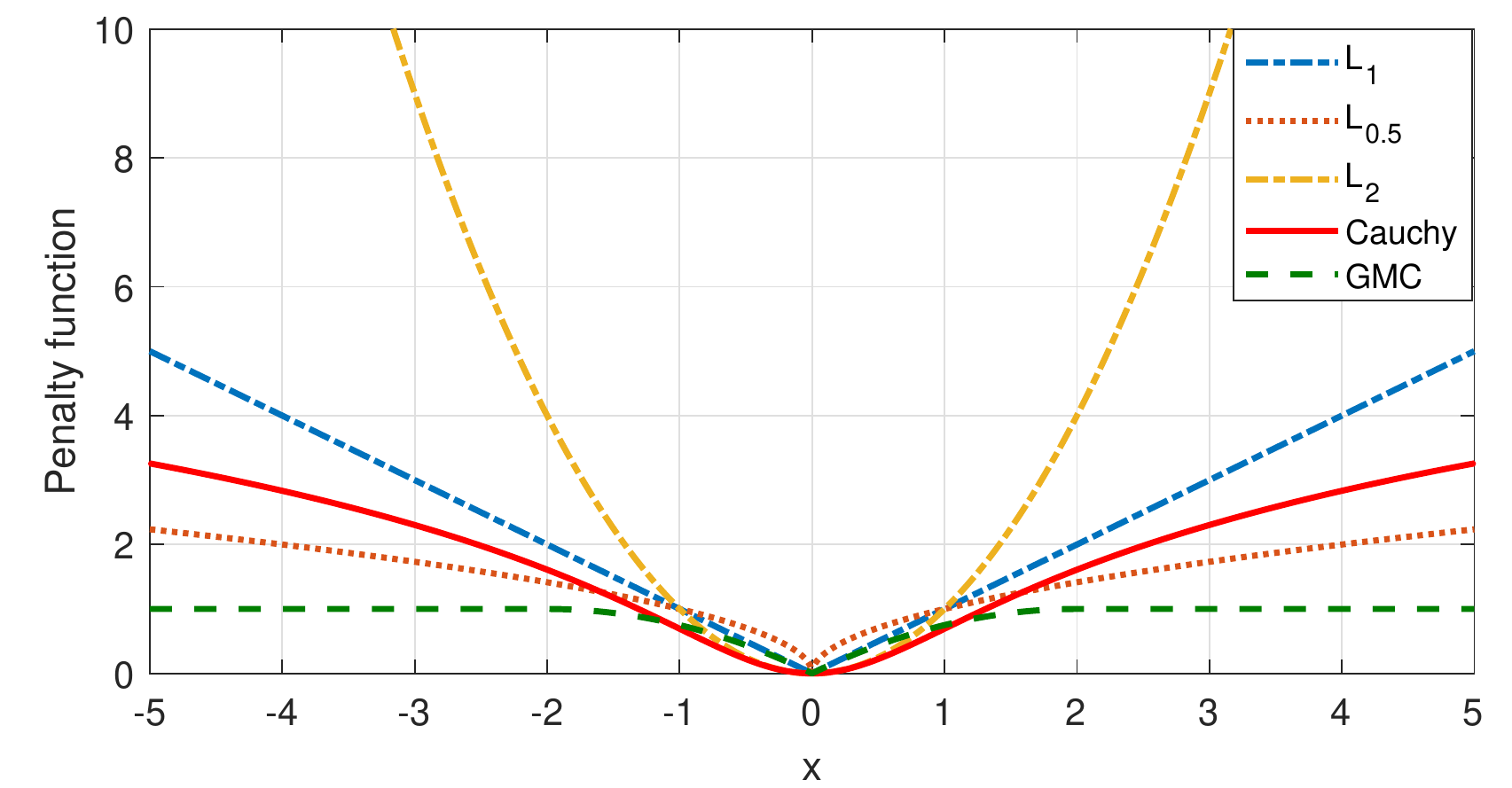}}
\subfigure[]{\includegraphics[width=.6\linewidth]{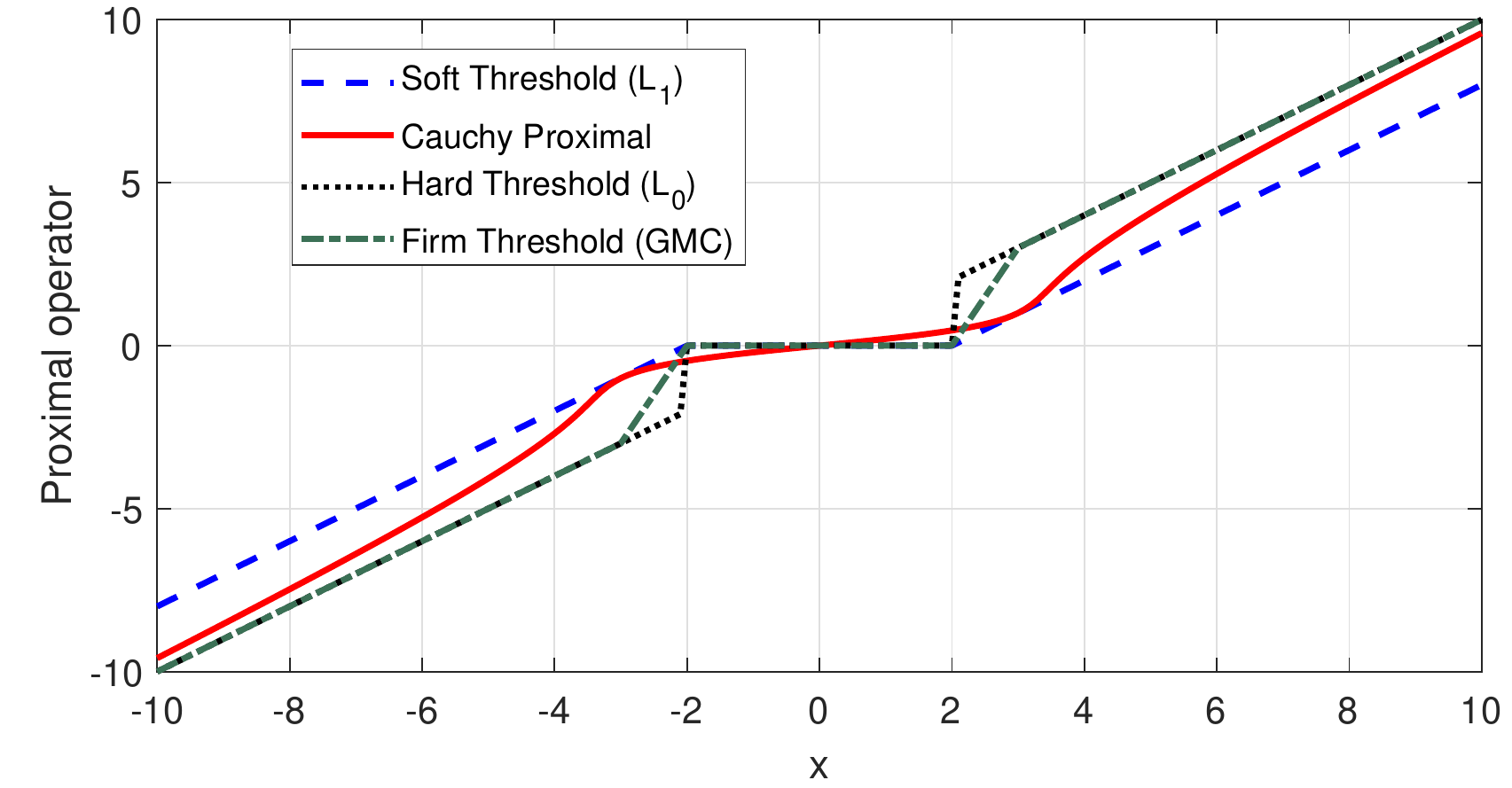}}
\caption{Visual comparison for (a) penalty functions, (b) proximal operators.}
\label{fig:normproxComp}%\vspace{-0.3cm}
\end{figure}

%\subsection{On the convergence of the proposed method}
\section{Convergence analysis}\label{sec:ConvAn}
In order to analyse the convergence properties of the proposed method, we start from the minimisation given in (\ref{equ:miniCauchy}). Since we have a quadratic data fidelity term and a non-convex penalty function, the overall cost function in (\ref{equ:miniCauchy}) might be non-convex. To benefit from convex optimisation principles in solving (\ref{equ:miniCauchy}), we seek to ensure that the cost function in (\ref{equ:miniCauchy}) is convex by controlling the general system parameters e.g. $\sigma$ and $\gamma$. For this purpose, we start with the following lemma.

\begin{lem}\label{lem:lemma1}
Let the function $h$ has the form defined in (\ref{equ:neglogCuachy}) with $\gamma>0$. The function $h(x)$ is twice continuously differentiable and non convex $\forall x$.
\end{lem}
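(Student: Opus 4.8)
The plan is to work directly with the second derivative of $h$: exhibit an open set on which it is strictly negative, which rules out convexity, while the smoothness of the closed-form expression gives the $C^2$ claim essentially for free.

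First I would rewrite the penalty in additive form, $h(x) = \log(\gamma^2 + x^2) - \log\gamma$, using the quotient rule for logarithms. Since $\gamma>0$, the argument $\gamma^2 + x^2$ is bounded below by $\gamma^2>0$ for every $x\in\mathbb{R}$, so $x\mapsto\log(\gamma^2+x^2)$ is the composition of the $C^\infty$ map $x\mapsto\gamma^2+x^2$ with $\log$ restricted to $(0,\infty)$; hence $h\in C^\infty(\mathbb{R})$, and in particular $h$ is twice continuously differentiable, which settles the first half of the statement.

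Next I would compute the first two derivatives,
\begin{align}
h'(x) = \frac{2x}{\gamma^2 + x^2}, \qquad h''(x) = \frac{2(\gamma^2 - x^2)}{(\gamma^2 + x^2)^2}.
\end{align}
The denominator of $h''$ is strictly positive, so the sign of $h''(x)$ is governed entirely by the factor $\gamma^2 - x^2$: it is positive for $|x|<\gamma$, zero at $x=\pm\gamma$, and strictly negative for $|x|>\gamma$. Picking any $x$ with $|x|>\gamma$ gives $h''(x)<0$, so $h$ violates the second-order characterisation of convexity on $\mathbb{R}$ (indeed $h$ is strictly concave on $(\gamma,\infty)$ and on $(-\infty,-\gamma)$). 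Therefore $h$ is non-convex, which completes the argument.

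There is no genuine obstacle here; the only point needing a little care is to state the failure of convexity correctly — it is enough that $h''<0$ at a single point, hence on an open interval, because a convex function of class $C^2$ must satisfy $h''\ge 0$ everywhere. It would also be worth remarking that $h$ \emph{is} convex when restricted to $[-\gamma,\gamma]$, since this local convexity is precisely what is exploited when deriving the parameter conditions later in this section.
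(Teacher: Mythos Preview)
Your proof is correct and follows essentially the same route as the paper: both compute $h'(x)=\dfrac{2x}{\gamma^2+x^2}$ and $h''(x)=\dfrac{2(\gamma^2-x^2)}{(\gamma^2+x^2)^2}$, then read off non-convexity from the sign change of $h''$ at $|x|=\gamma$. The only minor difference is in the $C^2$ justification: the paper verifies that the one-sided limits $h'(0^\pm)$ and $h''(0^\pm)$ agree, whereas your composition-of-$C^\infty$-maps argument is cleaner and in fact the more natural one, since $h$ has no piecewise structure or potential singularity at $0$ that would require checking one-sided limits.
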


\begin{proof}
In order to prove that the function $h(x)$ is twice continuously differentiable, we need to show: $h'(0^+)=h'(0^-)$ and $h''(0^+)=h''(0^-)$. Thus, we start with the first derivative: $h'(x) = \frac{2x}{\gamma^2 + x^2}$, whose zero limit values around $x=0$ are $h'(0^+)=h'(0^-)=0$. Similarly, the second derivative is obtained as: $h''(x) = \frac{2\gamma^2 - 2x^2}{x^4 + 2\gamma^2x^2 + \gamma^4}$. Around 0, limit values for the second derivative are $h''(0^+)=h''(0^-)=\frac{2}{\gamma^2}$. Thus, the function $h$ is obviously twice continuously differentiable.

The function $h$ is convex if $h''(x) \geq 0$, $\forall x$. However, we recall that the second derivative is $h''(x) = \frac{2\gamma^2 - 2x^2}{x^4 + 2\gamma^2x^2 + \gamma^4}$, which satisfies $h''(x) \geq 0$ only for $-\gamma\leq x \leq \gamma$ and thus, $h$ is not convex.
\end{proof}

\begin{figure}[h!]
    \centering
    \includegraphics[width=0.6\linewidth]{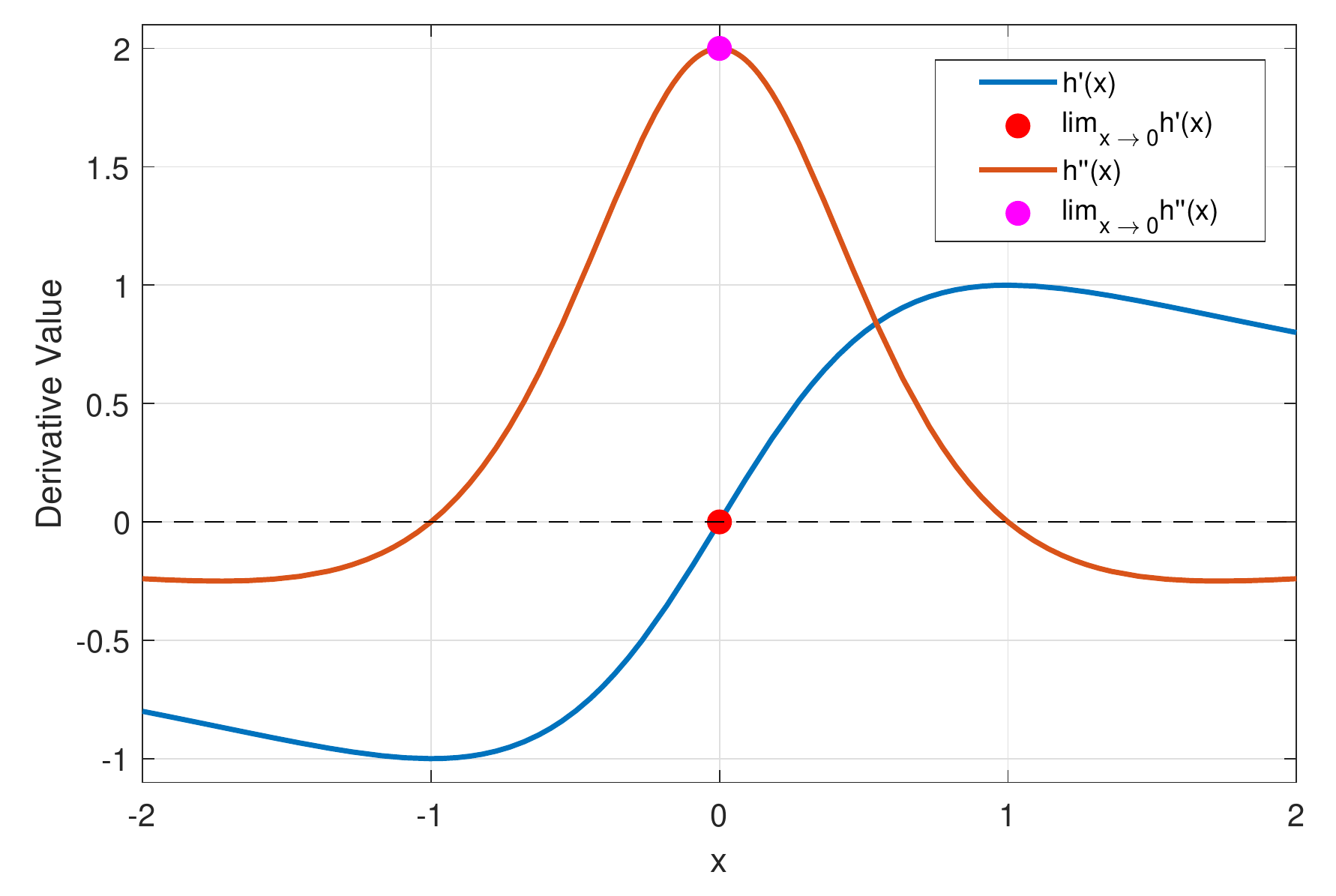}
    \caption{Derivative of the Cauchy-based penalty for $\gamma = 1$.}
    \label{fig:lemma}
\end{figure}

\begin{rem}
The function $h$ from Lemma \ref{lem:lemma1} is non-convex except for $-\gamma\leq x \leq \gamma$, wherein $h''(x) \geq 0$. Since $\gamma$ generally takes relatively small values when compared to $x$, it is not practical to enforce this condition for convexity. Therefore, we assume that the function $h$ is non-convex almost everywhere on the support of $x$.
\end{rem}

Figure \ref{fig:lemma} depicts plots of both the first and second derivatives of the function $h$ with $\gamma = 1$. It offers a graphical confirmation of the proof to Lemma \ref{lem:lemma1}. Specifically, red and magenta dots in Figure \ref{fig:lemma} show limit values for the first and second derivatives, respectively. Besides, the horizontal dashed-line shows derivative value equals to zero, where the second derivative takes negative values outside of the interval $-\gamma\leq x \leq \gamma$, which demonstrates the non-convexity of the function $h$.

We now state the following theorem that establishes the condition to preserve the convexity of the cost function in (\ref{equ:miniCauchy}).

\begin{thm}\label{thm:theorem1_new}
Let $h$ be the twice continuously differentiable and non-convex penalty function in (\ref{equ:neglogCuachy}) with $\gamma>0$, and the forward operator $\mathcal{A}$ either orthogonal satisfying $\mathcal{A}^T\mathcal{A}=\mathcal{I}$, or an overcomplete tight frame satisfying $\mathcal{A}^T\mathcal{A} \approx r\mathcal{I}$ with $r>0$ where $\mathcal{I}$ is the identity matrix. Then, the cost function $F:\mathbb{R}^N\rightarrow\mathbb{R}$
%\begin{subequations}
\begin{align}
    F(x) = \frac{\|y - \mathcal{A}x \|^2}{2\sigma^2}  -\sum_{i,j}\log\left(\frac{\gamma}{\gamma^2 + x_{ij}^2} \right)
\end{align}
is strictly convex if
\begin{align}
    \gamma \geq \frac{\sigma}{2\sqrt{r}}.
\end{align}
%\end{subequations}
\end{thm}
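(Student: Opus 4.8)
The plan is to prove strict convexity of $F$ by showing that its Hessian is positive definite (up to a negligible set) under the stated bound on $\gamma$, which is feasible here precisely because the tight-frame hypothesis makes both Hessians essentially multiples of the identity. First I would observe that $F$ is $C^2$: the data-fidelity term is a smooth quadratic, and the penalty $\psi(x)=\sum_{i,j}h(x_{ij})$ is a finite sum of copies of the function $h$ from (\ref{equ:neglogCuachy}), which is twice continuously differentiable by Lemma~\ref{lem:lemma1}. Since $\psi$ separates across the coordinates of $x$, its Hessian is the diagonal matrix $\mathrm{diag}\big(h''(x_{ij})\big)$, while the Hessian of the quadratic term is $\mathcal{A}^T\mathcal{A}/\sigma^2$. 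Invoking the assumption $\mathcal{A}^T\mathcal{A}=\mathcal{I}$ (orthogonal case, $r=1$) or $\mathcal{A}^T\mathcal{A}\approx r\mathcal{I}$ (tight-frame case) gives
\begin{align}
\nabla^2 F(x) = \frac{r}{\sigma^2}\,\mathcal{I} + \mathrm{diag}\big(h''(x_{ij})\big),
\end{align}
a diagonal matrix whose entries are $r/\sigma^2 + h''(x_{ij})$, so it suffices to bound $h''$ from below.

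The core of the argument is then a one-variable minimisation of $h''(t) = 2(\gamma^2 - t^2)/(\gamma^2 + t^2)^2$ (the second derivative computed in the proof of Lemma~\ref{lem:lemma1}): I would substitute $\tau = t^2\ge 0$, differentiate $\tau\mapsto 2(\gamma^2-\tau)/(\gamma^2+\tau)^2$, locate its unique stationary point at $\tau=3\gamma^2$, and compare with the boundary behaviour ($2/\gamma^2>0$ at $\tau=0$, and $\to 0^-$ as $\tau\to\infty$) to confirm that $\tau=3\gamma^2$ is a global minimiser with value $-1/(4\gamma^2)$. Hence $h''(t)\ge -1/(4\gamma^2)$ for all $t$, so every diagonal entry of $\nabla^2 F(x)$ is at least $r/\sigma^2 - 1/(4\gamma^2)$, which is nonnegative exactly when $\gamma\ge \sigma/(2\sqrt{r})$; thus $\nabla^2 F(x)$ is positive semidefinite for all $x$ under the hypothesis. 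Since $h''(t) = -1/(4\gamma^2)$ only at $t=\pm\sqrt{3}\,\gamma$, along any line segment the restricted second derivative of $F$ is a nonnegative function vanishing at only finitely many parameter values, so each one-dimensional restriction of $F$ is strictly convex, and therefore $F$ itself is strictly convex.

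The step I expect to require the most care is the passage from "positive semidefinite Hessian" to "strictly convex" at the boundary $\gamma=\sigma/(2\sqrt{r})$, where $\nabla^2 F$ genuinely degenerates on the lower-dimensional set $\{x: x_{ij}^2 = 3\gamma^2 \text{ for some } i,j\}$; the clean route is the line-segment restriction sketched above rather than a naive appeal to positive definiteness. A secondary point worth stating precisely is the meaning of $\mathcal{A}^T\mathcal{A}\approx r\mathcal{I}$ in the frame case: what is actually used is the lower frame bound $\lambda_{\min}(\mathcal{A}^T\mathcal{A})\ge r$, which still yields $\nabla^2 F(x)\succeq \big(r/\sigma^2 - 1/(4\gamma^2)\big)\mathcal{I}$. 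Everything else is routine differentiation.
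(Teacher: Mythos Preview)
Your proposal is correct and follows the same high-level strategy as the paper: compute $\nabla^2 F$, use $\mathcal{A}^T\mathcal{A}\approx r\mathcal{I}$ to diagonalise it, and reduce to a scalar inequality of the form $r/\sigma^2 + h''(t)\ge 0$. The execution of that scalar step, however, is genuinely different. The paper clears the denominator $(\gamma^2+t^2)^2$, obtains a quartic polynomial in $t$, and completes the square to arrive at $\big(\sqrt{r}\,t^2 + \sqrt{r}\gamma^2 - \sigma^2/\sqrt{r}\big)^2 + \sigma^2(4\gamma^2-\sigma^2/r)\ge 0$, from which $4\gamma^2\ge \sigma^2/r$ drops out as a sufficient condition. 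You instead compute $\inf_{t} h''(t)$ directly by substituting $\tau=t^2$ and differentiating, locating the sharp minimum $-1/(4\gamma^2)$ at $\tau=3\gamma^2$. Your route is shorter and immediately shows the bound is tight (the infimum is actually attained), which the square-completion argument also yields but less transparently. You are also more careful than the paper on two points it glosses over: the passage from positive semidefinite Hessian to \emph{strict} convexity at the boundary $\gamma=\sigma/(2\sqrt{r})$ via line-segment restrictions, and the precise role of the tight-frame assumption (only the lower frame bound $\lambda_{\min}(\mathcal{A}^T\mathcal{A})\ge r$ is needed).
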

\begin{proof}
According to Lemma \ref{lem:lemma1}, the function $F$ is twice continuously differentiable, and we further express the Hessian of $F$ as
%\begin{subequations}
\begin{align}
    \bigtriangledown^2F(x) = \dfrac{\mathcal{A}^T\mathcal{A}}{\sigma^2} + \dfrac{2\gamma^2 - 2x^2}{x^4 + 2\gamma^2x^2 + \gamma^4}
\end{align}
This must be positive definite in order for the cost function $F$ to be convex with $\bigtriangledown^2F(x) \succeq 0$, then we write
\begin{align}
    \label{equ:thmFHess11}\dfrac{\mathcal{A}^T\mathcal{A}}{\sigma^2} + \dfrac{2\gamma^2 - 2x^2}{x^4 + 2\gamma^2x^2 + \gamma^4} &\succeq 0.
\end{align}
Recalling that $\mathcal{A}^T\mathcal{A} \approx r\mathcal{I}$, then we have
\begin{align}
\dfrac{r\mathcal{I}}{\sigma^2} + \dfrac{2\gamma^2 - 2x^2}{x^4 + 2\gamma^2x^2 + \gamma^4} &\succeq 0\\
    \dfrac{rx^4 + r2\gamma^2x^2 + r\gamma^4\  + 2\sigma^2\ (\gamma^2 - x^2)}{x^4 + 2\gamma^2x^2 + \gamma^4} &\succeq 0,\\
    rx^4 - 2r\gamma^2x^2 + r\gamma^4\  + 2\sigma^2\gamma^2\  - 2\sigma^2x^2 &\succeq 0,\\
    \label{equ:subthm1}rx^4 + 2\sqrt{r}x^2\left(\sqrt{r}\gamma^2 - \frac{\sigma^2}{\sqrt{r}}\right) + r\gamma^4\  + 2\sigma^2\gamma^2\  &\succeq 0.
\end{align}
To complete the square on the left-hand side, we add and subtract $\frac{\sigma^4}{r}\ $ and $4\sigma^2\gamma^2$ to (\ref{equ:subthm1}). Then, we have
%\begin{multline}
%    rx^4 + 2\sqrt{r}x^2\left(\sqrt{r}\gamma^2 - \frac{\sigma^2}{\sqrt{r}}\right) + r\gamma^4\  + 2\sigma^2\gamma^2\ \\ - \frac{\sigma^4}{r}\  + \frac{\sigma^4}{r}\  - 4\sigma^2\gamma^2\  + 4\sigma^2\gamma^2\  \succeq 0,
%\end{multline}
\begin{multline}
    rx^4 + 2\sqrt{r}x^2\left(\sqrt{r}\gamma^2 - \frac{\sigma^2}{\sqrt{r}}\right) + \left(\sqrt{r}\gamma^2 - \frac{\sigma^2}{\sqrt{r}}\right)^2\ \\ -\left(\frac{\sigma^4}{r} + 4\sigma^2\gamma^2\right)\
     \succeq 0,
\end{multline}
\begin{multline}
    \label{equ:proof11}\left(\sqrt{r}x^2 + (\sqrt{r}\gamma^2 - \frac{\sigma^2}{\sqrt{r}})\right)^2 + \sigma^2(4\gamma^2 - \frac{\sigma^2}{r})\  \succeq 0.
\end{multline}
It can be easily seen that the term $\left(\sqrt{r}x_i^2 + (\sqrt{r}\gamma^2 - \frac{\sigma^2}{\sqrt{r}})\right)^2$ is always positive as well as the noise standard deviation $\sigma$. Thus, for the inequality in (\ref{equ:proof11}) to hold,  the simplified condition of
\begin{align}
    4\gamma^2 - \frac{\sigma^2}{r} \geq 0
\end{align}
should be satisfied. This leads to the condition required to ensure (strict) convexity of the function $F$:
\begin{align}\label{equ:ProofCond11}
    \gamma \geq \frac{\sigma}{2\sqrt{r}}.
\end{align}
and the existance of a unique solution for the given cost function.
%\end{subequations}
\end{proof}

Theorem \ref{thm:theorem1_new} provides the critical value for the scale parameter of the non-convex Cauchy-based penalty that ensures the whole cost function remains convex. As noted, this condition depends of the value of the noise standard deviation $\sigma$ and the parameter $r$, which follows from the assumption that $\mathcal{A}^T \mathcal{A}$ has a diagonal form. In the following we make another remark.

\begin{rem}
Preserving the convexity of the problem overall, in spite of the non-convexity of the Cauchy based penalty function, requires to have a forward operator $\mathcal{A}$, which is orthonormal ($\mathcal{A}^T\mathcal{A}=\mathcal{I}$) or constitutes an overcomplete tight frame with $\mathcal{A}^T\mathcal{A} \approx r\mathcal{I}$. For applications such as denoising, where $\mathcal{A}=\mathcal{I}$ and situations where $\mathcal{A}$ is the Fourier or orthogonal wavelet transform, convergence is guaranteed according to Theorem \ref{thm:theorem1_new}. However, in cases where forward models do not satisfy the relation $\mathcal{A}^T\mathcal{A} \approx r\mathcal{I}$, or estimating $r$ is challenging, the condition given in Theorem \ref{thm:theorem1_new} will not be suitable to ensure convergence.
\end{rem}

For more general situations,which include the assumptions in Theorem \ref{thm:theorem1_new} and beyond, we propose another solution which guarantees convergence provided that the solution is obtained via a proximal splitting algorithm even though $\mathcal{A}^T\mathcal{A} \neq r\mathcal{I}$.
We start by another lemma, which states a condition to ensure that the Cauchy proximal operator cost function is convex, and converges to a global minimum even though it corresponds to a non-convex penalty function.

\begin{lem} \label{lem:lemma2}
%For the twice continuously differentiable and non-convex Cauchy based penalty function in (\ref{equ:neglogCuachy})
The function $J:\mathbb{R} \rightarrow \mathbb{R}$
%\begin{subequations}
\begin{align}
    J(u) = \frac{\|x - u \|^2}{2\mu}  -\log\left(\frac{\gamma}{\gamma^2 + u^2} \right)
\end{align}
with $\gamma>0$, $\mu>0$,
%to be minimised in order to obtain the proximal operator in (\ref{equ:proxCauchy}) and becomes
is strictly convex if the following condition is obeyed:
\begin{align}
    \gamma \geq \frac{\sqrt{\mu}}{2}.
\end{align}
%\end{subequations}
\end{lem}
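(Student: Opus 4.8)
The plan is to mirror the argument in the proof of Theorem~\ref{thm:theorem1_new}, specialised to the scalar case in which the forward operator is the identity and the role of the noise variance $\sigma^2$ is played by the proximal parameter $\mu$. First I would note that $J$ is the sum of the smooth quadratic $\|x-u\|^2/2\mu$ and the function $h(u) = -\log\bigl(\gamma/(\gamma^2+u^2)\bigr)$ of Lemma~\ref{lem:lemma1}, which is twice continuously differentiable; hence $J\in C^2$ and its (strict) convexity is governed by the sign of $J''$. Using the expression for $h''$ recorded in the proof of Lemma~\ref{lem:lemma1},
\begin{align}
    J''(u) = \frac{1}{\mu} + \frac{2\gamma^2 - 2u^2}{u^4 + 2\gamma^2 u^2 + \gamma^4}.
\end{align}

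Next I would multiply the inequality $J''(u)\ge 0$ by the strictly positive quantity $\mu\,(u^2+\gamma^2)^2$, reducing the question to the quartic inequality
\begin{align}
    u^4 + 2u^2(\gamma^2 - \mu) + \gamma^4 + 2\mu\gamma^2 \;\ge\; 0
\end{align}
for all $u\in\mathbb{R}$. Completing the square in the variable $u^2$ — by adding and subtracting $(\gamma^2-\mu)^2$, exactly as in the passage leading to (\ref{equ:proof11}) — rewrites the left-hand side as
\begin{align}
    \bigl(u^2 + \gamma^2 - \mu\bigr)^2 + \mu\,(4\gamma^2 - \mu).
\end{align}
Since the first summand is a perfect square and $\mu>0$, this expression is non-negative for every $u$ as soon as $4\gamma^2 - \mu \ge 0$, i.e. $\gamma \ge \sqrt{\mu}/2$, which is precisely the stated condition.

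Finally I would address strictness: under $\gamma \ge \sqrt{\mu}/2$ one has $J''(u)\ge 0$ everywhere, while $J''(u)=0$ can occur only where $(u^2+\gamma^2-\mu)^2 = -\mu(4\gamma^2-\mu)\le 0$, i.e. on at most two isolated points; since $J''>0$ off this finite set, $J$ is strictly convex, which also yields uniqueness of the minimiser in (\ref{equ:proxCauchy}). The derivation is essentially routine, so the only points requiring a little care are the bookkeeping in the completing-the-square step and the remark that $J''$ vanishing at isolated points is harmless for strict convexity; I do not anticipate a substantive obstacle beyond that.
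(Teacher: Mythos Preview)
Your proposal is correct and follows essentially the same route as the paper: compute $J''$, clear the positive denominator to obtain the quartic in $u^2$, complete the square to $(u^2+\gamma^2-\mu)^2+\mu(4\gamma^2-\mu)$, and read off the condition $\gamma\ge\sqrt{\mu}/2$. Your additional remark that $J''$ can vanish only at isolated points (so strict convexity survives even at equality) is a refinement the paper does not spell out.
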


\begin{proof}
We first express the second derivation of $J$ as
%\begin{subequations}
\begin{align}
    J''(u) = \dfrac{1}{\mu} + \dfrac{2\gamma^2 - 2u^2}{u^4 + 2\gamma^2u^2 + \gamma^4}.
\end{align}
Then, akin to the proof of Theorem \ref{thm:theorem1_new}, we continue with the convexity condition
\begin{align}
    J''(u) &\geq 0,\\
    \label{equ:thmFHess}\dfrac{1}{\mu} + \dfrac{2\gamma^2 - 2u^2}{u^4 + 2\gamma^2u^2 + \gamma^4} &\geq 0,\\
    \dfrac{u^4 + 2\gamma^2u^2 + \gamma^4 + 2\mu(\gamma^2 - u^2)}{u^4 + 2\gamma^2u^2 + \gamma^4} &\geq 0,\\
    u^4 - 2u^2(\mu - \gamma^2) + \gamma^2(\gamma^2 + 2\mu) &\geq 0.
\end{align}
To complete the square on the left-hand side, we add and subtract $\mu^2$ and $4\gamma^2\mu$, which gives
%\begin{multline}
%    u^4 - 2u^2(\mu - \gamma^2) + \gamma^2(\gamma^2 + 2\mu) + \mu^2 - \mu^2 +\\ 4\gamma^2\mu - 4\gamma^2\mu \geq 0
%\end{multline}
\begin{align}
    (u^2)^2 - 2(u^2)(\mu - \gamma^2) + (\mu - \gamma^2)^2 - \mu^2 + 4\gamma^2\mu &\geq 0\\
%\end{multline}
%\begin{align}
    \label{equ:proof}(u^2 - (\mu - \gamma^2))^2 + \mu(4\gamma^2 - \mu) &\geq 0.
\end{align}

Since the term $(u^2 - (\mu - \gamma^2))^2$ is always positive as well as the step size $\mu$, for the inequality in (\ref{equ:proof}) to hold, the condition
\begin{align}
    4\gamma^2 - \mu \geq 0.
\end{align}
should be satisfied. Hence, the cost function in the Cauchy proximal operator $J$ becomes strictly convex if
\begin{align}\label{equ:ProofCond}
    \gamma \geq \frac{\sqrt{\mu}}{2}.
\end{align}
%\end{subequations}
\end{proof}

In Figure \ref{fig:Fcomp}, we demonstrate the effect of the relationship between $\mu$ and $\gamma$ on $J(u)$ and its second derivative $J''(u)$. Both sub-figures in Figure \ref{fig:Fcomp} obviously show that violating the expression for convexity given in Lemma \ref{lem:lemma2}, makes the cost function non-convex.%, which results a no-guarantee to converge to a unique minimum. %Note that in cases when the minimisation problem is not convex, nevertheless can be solved via proximal operators \cite{green2015bayesian}.

%\begin{rem}
%The cost function $F$ may still be convex with $(u^2 - (\mu - \gamma^2))^2 \geq -\mu(4\gamma^2 - \mu)$ in the case of violating the condition in Lemma \ref{lem:lemma2} by $\mu > 4\gamma^2$. Following a condition like this is not practical since it requires to set $u$ and ($\mu, \gamma$) concomitantly whilst trying to find a suitable $u$ which minimises $J$. Therefore, we suggest following the condition given in Lemma \ref{lem:lemma2}, which guarantees that the function $J$ is always convex.
%\end{rem}

\begin{figure}[t]
\centering
\subfigure[]{\includegraphics[width=.6\linewidth]{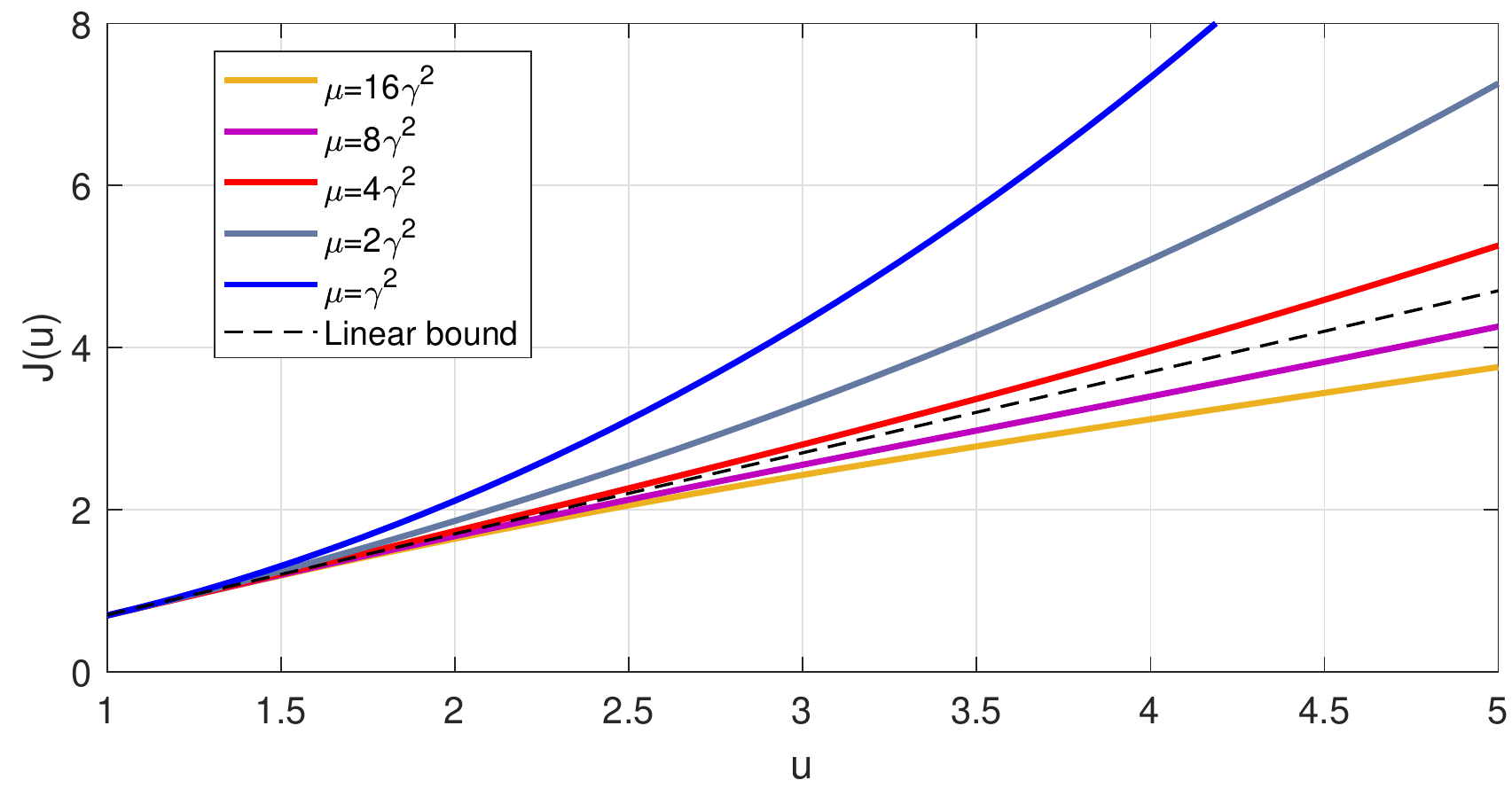}}
\subfigure[]{\includegraphics[width=.6\linewidth]{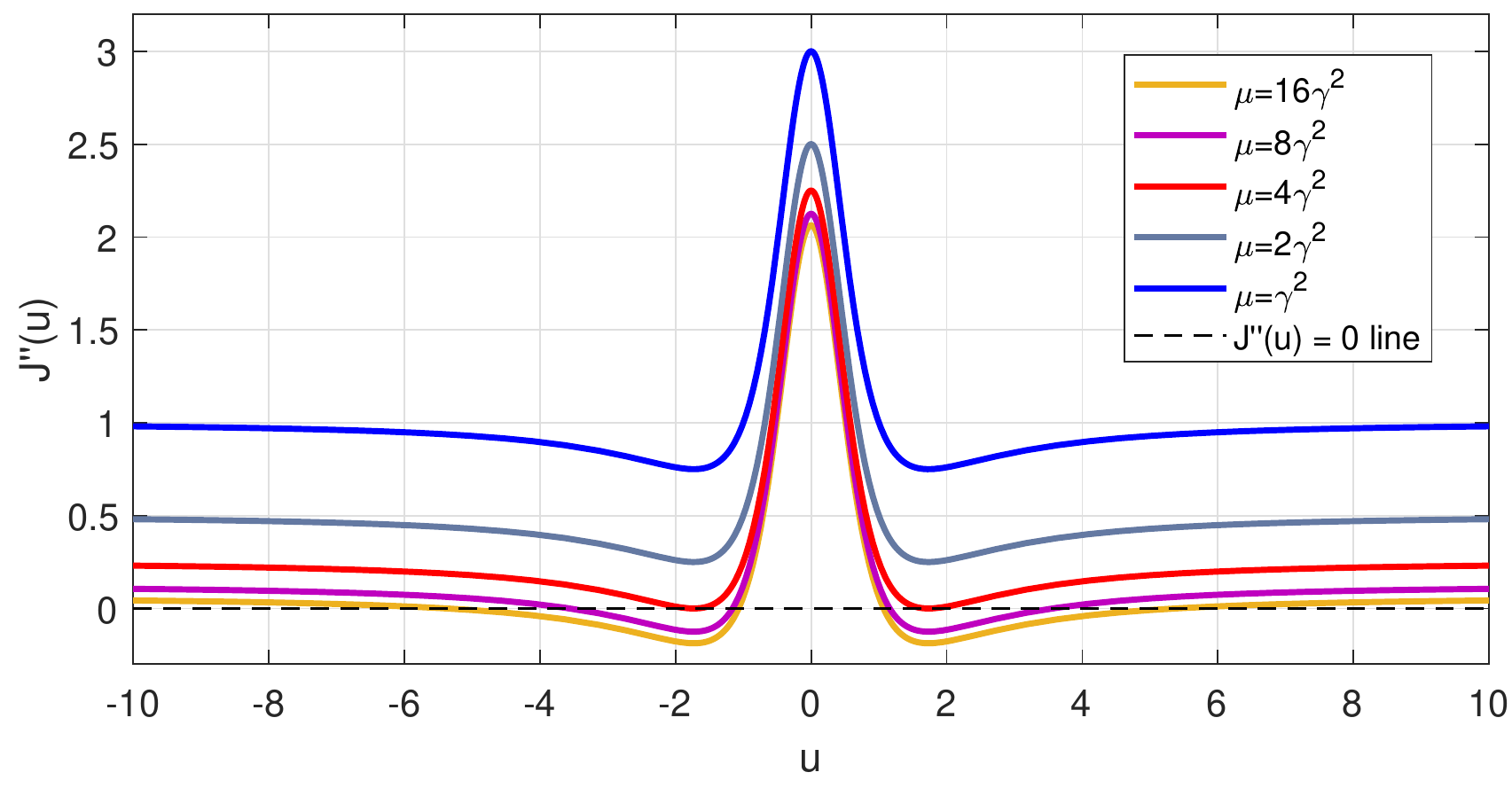}}
\caption{Behaviour of $J(u)$ (a) and $J''(u)$ (b) for various step sizes $\mu$ given a fixed value of the scale parameter $\gamma$.}
\label{fig:Fcomp}%\vspace{-0.3cm}
\end{figure}

\begin{rem}
Instead of providing a condition to ensure that the Cauchy based cost function remains convex, Lemma \ref{lem:lemma2} provides a condition which preserves the convexity of the Cauchy proximal operator. Please note that a solution to the proximal operator $prox_{Cauchy}^{\mu}$ can always be computed since it has an explicit expression which is given in Algorithm \ref{alg:proxCauchy}. However, the convexity condition given in Lemma \ref{lem:lemma2} leads on to the theorem in the following section, which provides the required condition to guarantee the convergence for the cost function in (\ref{equ:miniCauchy}), when relaxing the assumption of orthogonality and over-completeness of the forward operator $\mathcal{A}$ in Theorem \ref{thm:theorem1_new}.
\end{rem}

%\subsection{Cauchy proximal splitting}
There are several proximal splitting algorithms that can be used to solve the optimisation problem in (\ref{equ:mini1}), including the forward-backward splitting, Douglas-Rachford (DR) splitting, or alternating direction method of multipliers (ADMM) \cite{combettes2011proximal} to name but a few. In this paper, we focus on the forward-backward algorithm to obtain efficient solutions to the inverse problem in (\ref{equ:IP}).
Indeed, an optimisation problem where the cost function $F(x)$ is split into two functions, with the form
\begin{align}\label{equ:FB1}
    \arg\min_x \left[ F(x) = (f_1 + f_2)(x)\right]
\end{align}
can be solved via the FB algorithm. Provided $f_2:\mathbb{R}^N \rightarrow \mathbb{R}$ is $L$-Lipchitz differentiable with Lipchitz constant $L$ and $f_1:\mathbb{R}^N \rightarrow \mathbb{R}$, then, given the signal at $n^{\text{th}}$ iteration (i.e. $x^{(n)}$), the optimisation problem in (\ref{equ:FB1}) is solved iteratively as \cite{combettes2011proximal}
\begin{align}\label{equ:FB2}
    x^{(n+1)} = prox_{f_1}^{\mu} \left( x^{(n)} - \mu\bigtriangledown f_2(x^{(n)})  \right)
\end{align}
where step size $\mu$ is set within the interval $\left(0, \frac{2}{L} \right)$. In this paper, the function $f_2$ is the data fidelity term and takes the form of $\frac{\|y - \mathcal{A}x\|_2^2}{2\sigma^2}$ from (\ref{equ:miniCauchy}) whilst the function $f_1$ is the Cauchy based penalty function $h$. Following these preliminaries, we can now state the following:

\begin{thm} \label{thm:theorem2}
Let the twice continuously differentiable and non-convex regularisation function $h$ be the function $f_1$ and the $L$-Lipchitz differentiable data fidelity term $\frac{\|y - \mathcal{A}x\|_2^2}{2\sigma^2}$ be the function $f_2$. The iterative FB sub-solution to the optimisation problem in (\ref{equ:miniCauchy}) is
%\begin{subequations}
\begin{align}\label{equ:thm2}
    x^{(n+1)} = prox_{Cauchy}^{\mu} \left( x^{(n)} - \frac{\mu\mathcal{A}^T(\mathcal{A}x^{(n)} - y)}{\sigma^2}  \right)
\end{align}
where $\bigtriangledown f_2(x^{(n)}) = \frac{\mathcal{A}^T(\mathcal{A}x^{(n)} - y)}{\sigma^2}$. If the condition
\begin{align}\label{equ:Proofthm2}
    \gamma \geq \frac{\sqrt{\mu}}{2}
\end{align}
holds, then the sub-solution of the FB algorithm is strictly convex, and the FB iteration in (\ref{equ:thm2}) converges to the global minimum.
%\end{subequations}
\end{thm}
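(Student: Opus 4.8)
The plan is to recognise the update (\ref{equ:thm2}) as the forward--backward scheme (\ref{equ:FB2}) with $f_1 = h$ and $f_2(x) = \|y-\mathcal{A}x\|_2^2/(2\sigma^2)$, to use Lemma~\ref{lem:lemma2} to control the backward (proximal) step, and then to quote the standard FB convergence theorem, with the convexity condition of Lemma~\ref{lem:lemma2} substituting for the convexity of $f_1$ that the classical statement assumes. First I would check the mechanics of the splitting: $\bigtriangledown f_2(x) = \mathcal{A}^T(\mathcal{A}x-y)/\sigma^2$ is Lipschitz with constant $L = \lambda_{\max}(\mathcal{A}^T\mathcal{A})/\sigma^2$, so $f_2$ is $L$-Lipschitz differentiable and also convex (a positive-semidefinite quadratic), and any $\mu \in (0, 2/L)$ is admissible; inserting $f_1 = h$ and this gradient into (\ref{equ:FB2}) yields (\ref{equ:thm2}) directly.

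Next, the backward step. By definition $x^{(n+1)} = prox_{Cauchy}^{\mu}(v^{(n)})$ is a minimiser of $u \mapsto \|v^{(n)}-u\|^2/(2\mu) - \log(\gamma/(\gamma^2+u^2))$ with $v^{(n)} = x^{(n)} - \mu\bigtriangledown f_2(x^{(n)})$, which is exactly the function $J$ of Lemma~\ref{lem:lemma2} after replacing its data point $x$ by $v^{(n)}$ — this shift only alters the affine part of $J$, leaving $J''$ unchanged. Hence Lemma~\ref{lem:lemma2} applies unchanged: under $\gamma \ge \sqrt{\mu}/2$ every such subproblem is strictly convex, so it has a unique minimiser, $prox_{Cauchy}^{\mu}$ is single-valued and realised by Algorithm~\ref{alg:proxCauchy}, and the FB operator $T := prox_{Cauchy}^{\mu}\circ(\mathcal{I} - \mu\bigtriangledown f_2)$ is a well-defined, nonexpansive self-map of $\mathbb{R}^N$ whose fixed points are precisely the points $x^\star$ with $0 \in \bigtriangledown f_2(x^\star) + \partial h(x^\star)$.

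Finally, convergence. I would appeal to the forward--backward convergence result in \cite{combettes2011proximal}: the only use that proof makes of $f_1$ is that $prox_{f_1}^{\mu}$ be single-valued and nonexpansive, which we have just secured from the strict convexity of the subproblem; together with $\mu \in (0, 2/L)$ and convexity of $f_2$, the iterates $(x^{(n)})$ are Fej\'er-monotone relative to $\mathrm{Fix}(T)$ and converge to some $x^\star \in \mathrm{Fix}(T)$. The step I expect to be the main obstacle is upgrading this limit to the \emph{global} minimiser of $F$: the hypothesis $\gamma \ge \sqrt{\mu}/2$ is exactly the statement that $h + \|\cdot\|^2/(2\mu)$ is convex — equivalently $\mu \le 1/\rho$ for the weak-convexity modulus $\rho = 1/(4\gamma^2)$ of $h$ — but it does \emph{not} force $F = f_1 + f_2$ itself to be convex once $\mathcal{A}^T\mathcal{A} \ne r\mathcal{I}$, so "critical point of $F$" and "global minimiser of $F$" need not coincide in general. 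I would therefore either confine the final identification $\mathrm{Fix}(T) = \arg\min F$ to the convex regime of Theorem~\ref{thm:theorem1_new}, or use the prox-regularity/weak convexity of $h$ to argue that the computed limit is at least a local minimiser of $F$ which, given the structure of the strictly convex subproblems, is the one targeted by the algorithm — and I would flag this restriction explicitly rather than claim unconditional global optimality.
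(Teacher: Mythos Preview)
Your core approach---reduce the proximal subproblem to the function $J$ of Lemma~\ref{lem:lemma2} and conclude strict convexity of each backward step under $\gamma \ge \sqrt{\mu}/2$---is exactly what the paper does: its proof writes the subproblem cost $G$, computes its Hessian, observes that it coincides with $J''$, and invokes Lemma~\ref{lem:lemma2}. Where you diverge is in rigour. The paper stops at the sentence ``guaranteeing a convex minimisation problem at each FB iteration will make the whole process convex,'' which is precisely the gap you flag: strict convexity of each subproblem neither makes $F$ convex nor, by itself, forces the FB iterates to converge to $\arg\min F$ once $\mathcal{A}^T\mathcal{A}$ is not a multiple of $\mathcal{I}$. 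Your explicit reservation about upgrading ``fixed point of $T$'' to ``global minimiser of $F$'' is warranted and in fact goes beyond what the paper's own proof justifies.

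One point in your argument does need tightening: strict convexity of the subproblem does \emph{not} secure nonexpansiveness of $prox_h^{\mu}$. Strict convexity of $J$ only yields $h'' > -1/\mu$, i.e.\ $(1/\mu)$-weak convexity of $h$, and the prox of a weakly convex function can have Lipschitz constant strictly larger than one---for instance $h(u) = -\tfrac{c}{2}u^2$ with $0<c<1/\mu$ gives $prox_h^{\mu}(x) = x/(1-c\mu)$. So the classical Combettes--Pesquet FB theorem, whose Fej\'er-monotonicity argument relies on firm nonexpansiveness of the prox, does not apply as you state it. To reach even the critical-point conclusion you would need a nonconvex FB convergence result (e.g.\ via the Kurdyka--\L ojasiewicz property, which the Cauchy penalty, being real-analytic, does satisfy). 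This does not repair the global-optimality issue you already isolated, but it is the correct replacement for the nonexpansiveness step.
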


\begin{proof}
At each iteration $n$, in order to obtain the iterative estimate $x^{(n+1)}$, by comparing to (\ref{equ:prox}) and (\ref{equ:thm2}), we solve
%\begin{subequations}
\begin{align}
    x^{(n+1)} = \arg\min_u G(u)
\end{align}
where the function $G:\mathbb{R}^N \rightarrow \mathbb{R}$ is
\begin{align}
    G(u) = \dfrac{\left\| x^{(n)} - \frac{\mu\mathcal{A}^T(\mathcal{A}x^{(n)} - y)}{\sigma^2} - u \right\|_2^2}{2\mu} - \log\left(\dfrac{\gamma}{\gamma^2 + u^2} \right).
\end{align}

Guaranteeing a convex minimisation problem at each FB iteration will make the whole process convex. As a result, the iterative procedure in (\ref{equ:thm2}) converges to the global minimum of $G$.

Thus, for the cost function $G$ to be convex, the condition $\bigtriangledown^2 G(u) \succeq 0$ should be satisfied. Calculating the Hessian of $G$, we have
\begin{align}\label{equ:thm2GHess}
    \bigtriangledown^2 G(u) = \dfrac{\mathcal{I}}{\mu} + \dfrac{2\gamma^2 - 2u^2}{u^4 + 2\gamma^2u^2 + \gamma^4} \succeq 0,
\end{align}
It is straightforward to show that the required condition to satisfy (\ref{equ:thm2GHess}) can be obtained in the same way as in (\ref{equ:thmFHess}). Hence, the rest of the proof follows that of Lemma \ref{lem:lemma2}.

Consequently, despite having a non-convex penalty function, the FB sub-problem corresponding to the cost function $G$ is strictly convex and converges to the global minimum, with the condition
\begin{align}\label{equ:Proofthm22}
    \gamma \geq \frac{\sqrt{\mu}}{2}.
\end{align}
%\end{subequations}
\end{proof}

\begin{rem}
Note that satisfying the convexity condition for the Cauchy proximal operator via Lemma \ref{lem:lemma2} guarantees the convergence of the general solution via the iterative algorithm~(\ref{equ:thm2}). For this, either the step size $\mu$ can be set based on a $\gamma$ value estimated directly from the observations, or alternatively, $\gamma$ can be set in cases when the Lipchitz constant $L$ is computed and/or estimating $\gamma$ is ill-posed.
\end{rem}

\begin{rem}
Since the data fidelity function $f_2$ is convex and $L$-Lipchitz differentiable, using ADMM or DR algorithms instead of FB in solving the minimisation problem in (\ref{equ:FB1}) for the non-convex Cauchy based penalty function whilst satisfying condition (\ref{equ:Proofthm22}), will not change anything and therefore, their solutions converge to a solution. Thus, the FB approach in Cauchy proximal splitting algorithm given in Algorithm 2 can be replaced with other splitting algorithms.
\end{rem}

\begin{rem}\label{rem:remark6}
The non-convex Cauchy penalty function proposed in this paper guarantees convergence to a minimum by satisfying either (i) $\mathcal{A}^T\mathcal{A} \approx r\mathcal{I}$ (including $r=1$) along with the condition in Theorem \ref{thm:theorem1_new}, or (ii) just the condition from Theorem \ref{thm:theorem2} via a proximal splitting method such as the FB algorithm.%s (or in other similar proximal splitting algorithms).
\end{rem}

The convergence-guaranteed Cauchy proximal splitting algorithm is given in Algorithm \ref{alg:FB}.
\begin{algorithm}[ht!]
\caption{Cauchy proximal splitting (CPS) algorithm}\label{alg:FB}
\setstretch{1.2}
\begin{algorithmic}[1]
\State \textbf{Input:} $\text{Data, }y \text{ and } MaxIter$
\State \textbf{Input:} $\mu\in\left(0, \frac{2}{L}\right) \text{ and } \gamma\geq\frac{\sqrt{\mu}}{2}$
%\State \textbf{Output:} $x$
\State \textbf{Set:} $i\gets0 \text{ and } x^{(0)}$
  \Do
    \State $u^{(i)} \gets x^{(i)} - \mu \mathcal{A}^T(\mathcal{A}x^{(i)} - y)$
    \State $x^{(i+1)} \gets \textsc{proxCauchy}(u^{(i)}, \gamma, \mu) \text{ via Algorithm \ref{alg:proxCauchy}}$
    %prox_{Cauchy}^{\omega} (w^{(i)})
    \State $i++$
  \doWhile{$\dfrac{\|x^{(i)} - x^{(i-1)}\|}{\|x^{(i-1)}\|} > \varepsilon \text{ or } i<MaxIter$}
\end{algorithmic}
\end{algorithm}

\section{Experimental Analysis}\label{sec:results}
We focus the experimental part of this paper on three separate applications. First, we evaluate the proposed approach on 1D signal denoising in the frequency domain. Secondly, we investigate it when applied to two classical image processing tasks, i.e. denoising and de-blurring. Finally, we have illustrated an application to error recovery in MIMO communication systems.

\subsection{Signal Denoising in Frequency Domain}
The first example demonstrates the use of the non-convex Cauchy based penalty function in 1D signal denoising application. In particular, we consider the classical sinusoidal signal  "Heavy Sine" which is included in Matlab distributions. This signal was analysed in additive white Gaussian noise (AWGN) of several levels, with signal-to-noise-ratio (SNR) values between 2 and 12 decibels (dB).

We synthesised the signal $y \in \mathbb{R}^M$ via an over-sampled discrete inverse Fourier transform operator $\mathcal{F}^{-1}$ as $y = \mathcal{F}^{-1}x + n$, where $x\in\mathbb{C}^{N}$. The number of points in the frequency domain was chosen as $N = 512, 2048,$ and $8192$ whereas the number of signal samples was $M = 128, 256, 512$, respectively. We created three simulation cases corresponding to compression ratios ($M/N$) of 0.25, 0.125 and 0.0625. The operator $\mathcal{F}$ is a normalised tight frame with $\mathcal{F}^H\mathcal{F}=\mathcal{I}$. We compared the performance of the Cauchy based penalty function with $L_1$ and $TV$ norm penalty functions. The root-mean square error (RMSE) and mean absolute error (MAE) were used as evaluation metrics in this case.

The first experiment is depicted in Figure \ref{fig:gammaComp1D}, which shows the effect of the scale parameter $\gamma$ on denoising results both when violating and when satisfying the conditions proposed for convergence. Specifically, the vertical red and black dotted-lines show the scale parameter value for $\gamma = \sigma/2\sqrt{r}$ from Theorem \ref{thm:theorem1_new} and $\gamma = \sqrt{\mu}/2$ from Theorem \ref{thm:theorem2}, respectively.  A range of values for $\gamma$ between $10^{-2}$ and $10^{2}$ was set, and denoised signals were obtained for each $\gamma$ values by using the Algorithm \ref{alg:FB}. The error term $\varepsilon$ was set to $10^{-3}$ whilst the maximum number of iterations $MaxIter$ was set to 500. We follow \cite{combettes2011proximal} for the selection of the step size $\mu$ and then use Theorems \ref{thm:theorem1_new} and \ref{thm:theorem2} to decide the minimum value for $\gamma$ that guarantees convergence. From the definition \cite{combettes2011proximal}, the data fidelity term $\|y - \mathcal{F}^{-1}x\|_2^2$ is convex and differentiable with a $L$-Lipschitz continuous gradient, where $L$ is the Lipschitz constant. Thus, we can select the step size $\mu$ within the range $\left(0, \frac{2}{L}\right)$. There is no strict rule in choosing the $\mu$ values, but the literature suggests that choosing $\mu$ close to $\frac{2}{L}$ is more efficient. Hence, for this example, we decided to set $\mu=\frac{3}{2L}$.
On examining Figure \ref{fig:gammaComp1D}, it is clear that the lowest RMSE value is achieved for a $\gamma$ value higher than the critical values shown with red and black doted-lines. It can also be seen that $\gamma$ values 2-3 times higher than both critical values give relatively good results when compared to those with $\gamma$ values which are 20 times higher.% than the critical values.% of $\sqrt{\mu}/2$ or $\sigma/2\sqrt{r}$.

\begin{figure}[t!]
\centering
\includegraphics[width=.6\linewidth]{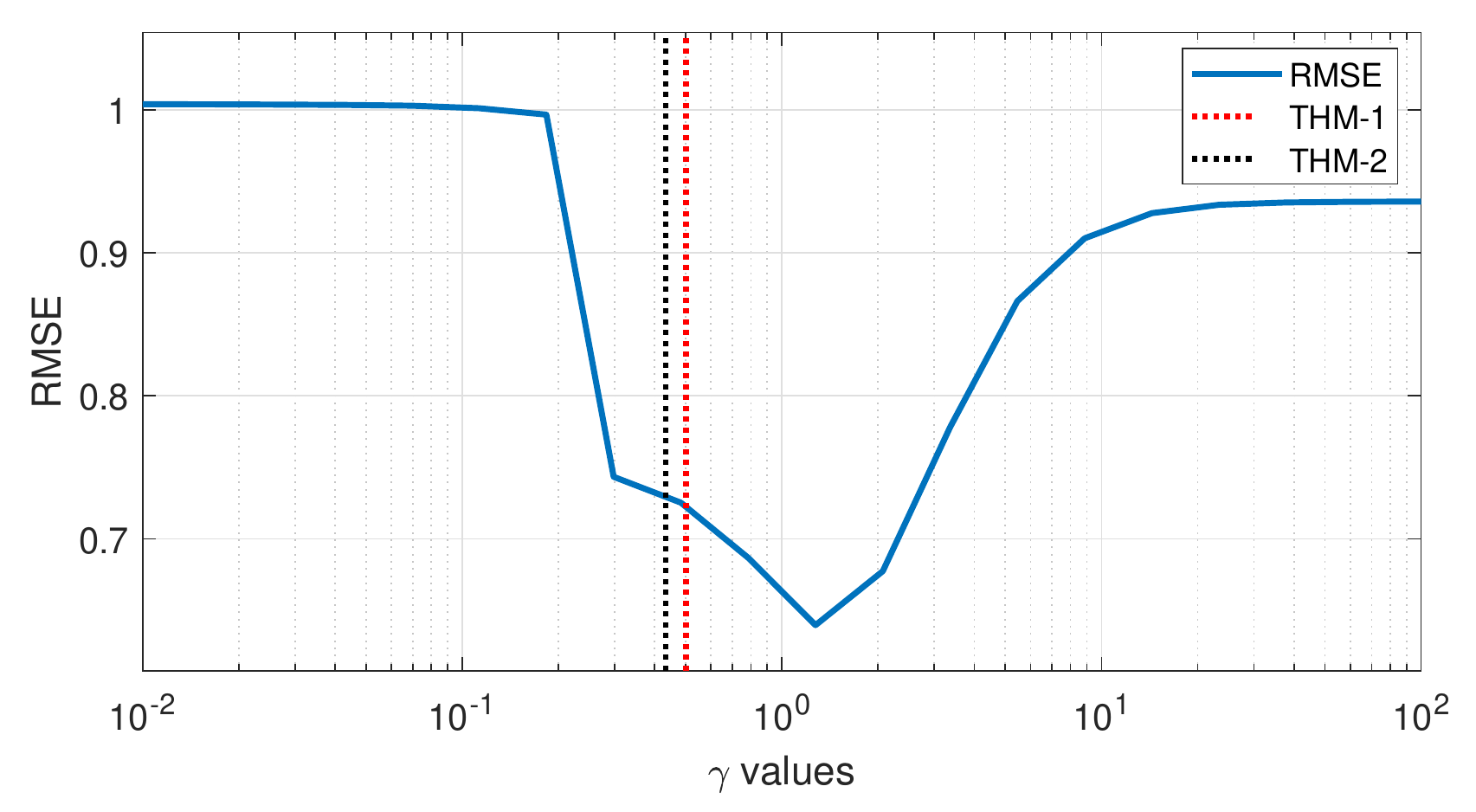}
\caption{Effect of the scale parameter $\gamma$ on the denoised signals ($M/N = 0.25$).}
\label{fig:gammaComp1D}%\vspace{-0.3cm}
\end{figure}

\begin{table}[htbp]
  \centering
  \caption{1D Denoising performance for different $M/N$ values.}
    \resizebox{0.65\linewidth}{!}{\begin{tabular}{cl|rrr|rrr}
    \toprule
          &       & \multicolumn{3}{|c|}{SNR = 4dB} & \multicolumn{3}{c}{SNR = 10dB} \\
          & Performance & \multicolumn{3}{|c|}{Methods} & \multicolumn{3}{c}{Methods} \\
    $M/N$   & Metric & \multicolumn{1}{|l}{$L_1$} & \multicolumn{1}{l}{$TV$} & \multicolumn{1}{l|}{Cauchy} & \multicolumn{1}{l}{$L_1$} & \multicolumn{1}{l}{$TV$} & \multicolumn{1}{l}{Cauchy} \\
    \toprule
    \multirow{2}[0]{*}{0.25} & RMSE  & 0.6481 & 0.6821 & \textbf{0.5687} & 0.9375 & 0.8405 & \textbf{0.8156} \\
          & MAE   & 0.4697 & 0.5266 & \textbf{0.3603} & 0.6135 & 0.5949 & \textbf{0.4952} \\
          \hline
    \multirow{2}[0]{*}{0.125} & RMSE  & 0.6418 & 0.6799 & \textbf{0.4649} & 0.9405 & 0.7676 & \textbf{0.6953} \\
          & MAE   & 0.4697 & 0.5266 & \textbf{0.3603} & 0.6135 & 0.5949 & \textbf{0.4952} \\
          \hline
    \multirow{2}[0]{*}{0.0625} & RMSE  & 0.6392 & 0.7305 & \textbf{0.4585} & 0.9324 & 0.8104 & \textbf{0.6974} \\
          & MAE   & 0.4600 & 0.5645 & \textbf{0.3492} & 0.5859 & 0.6365 & \textbf{0.4718} \\
          \bottomrule
    \end{tabular}}%
  \label{tab:1D}%
\end{table}%

\begin{figure}[ht!]
\centering
\includegraphics[width=.6\linewidth]{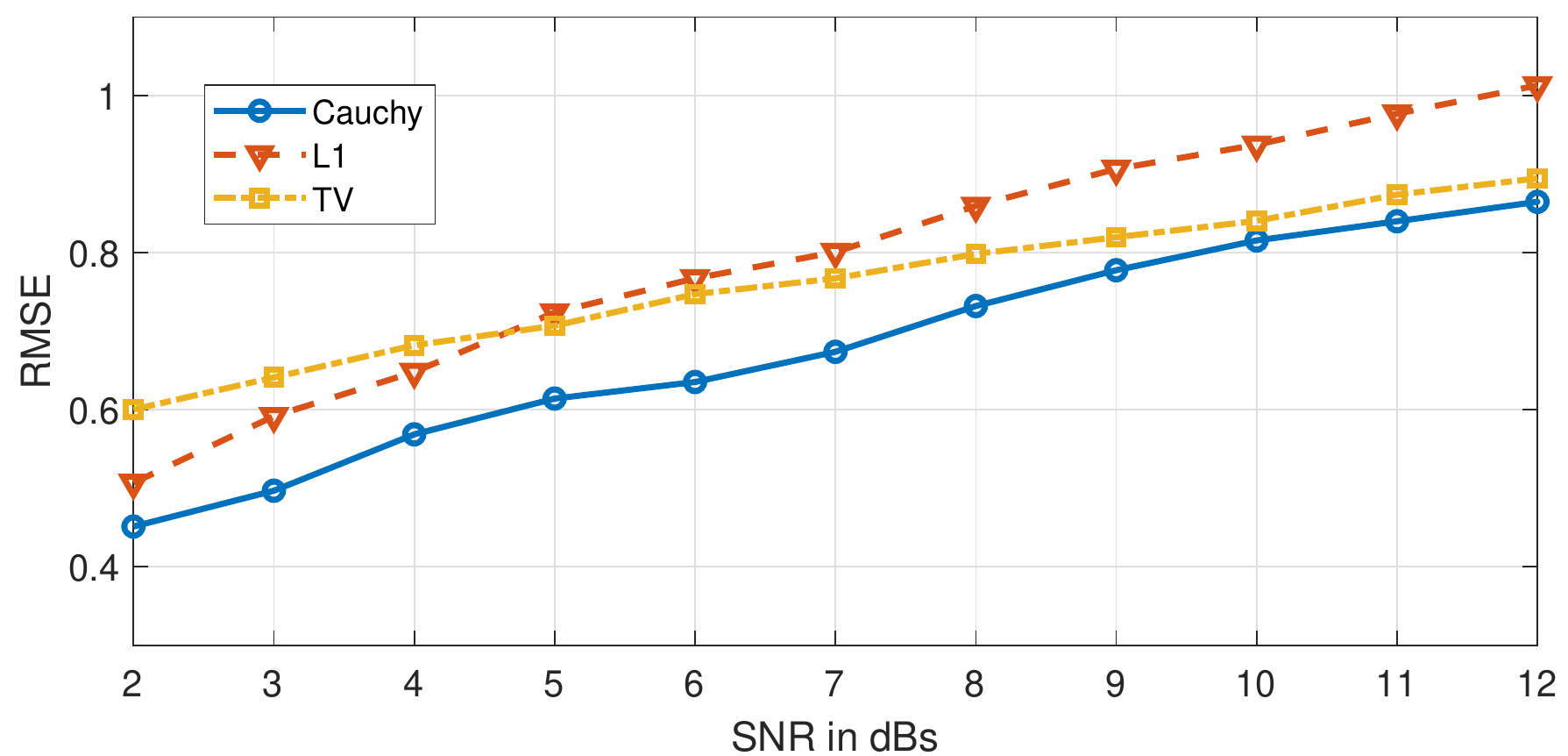}\\
\includegraphics[width=.6\linewidth]{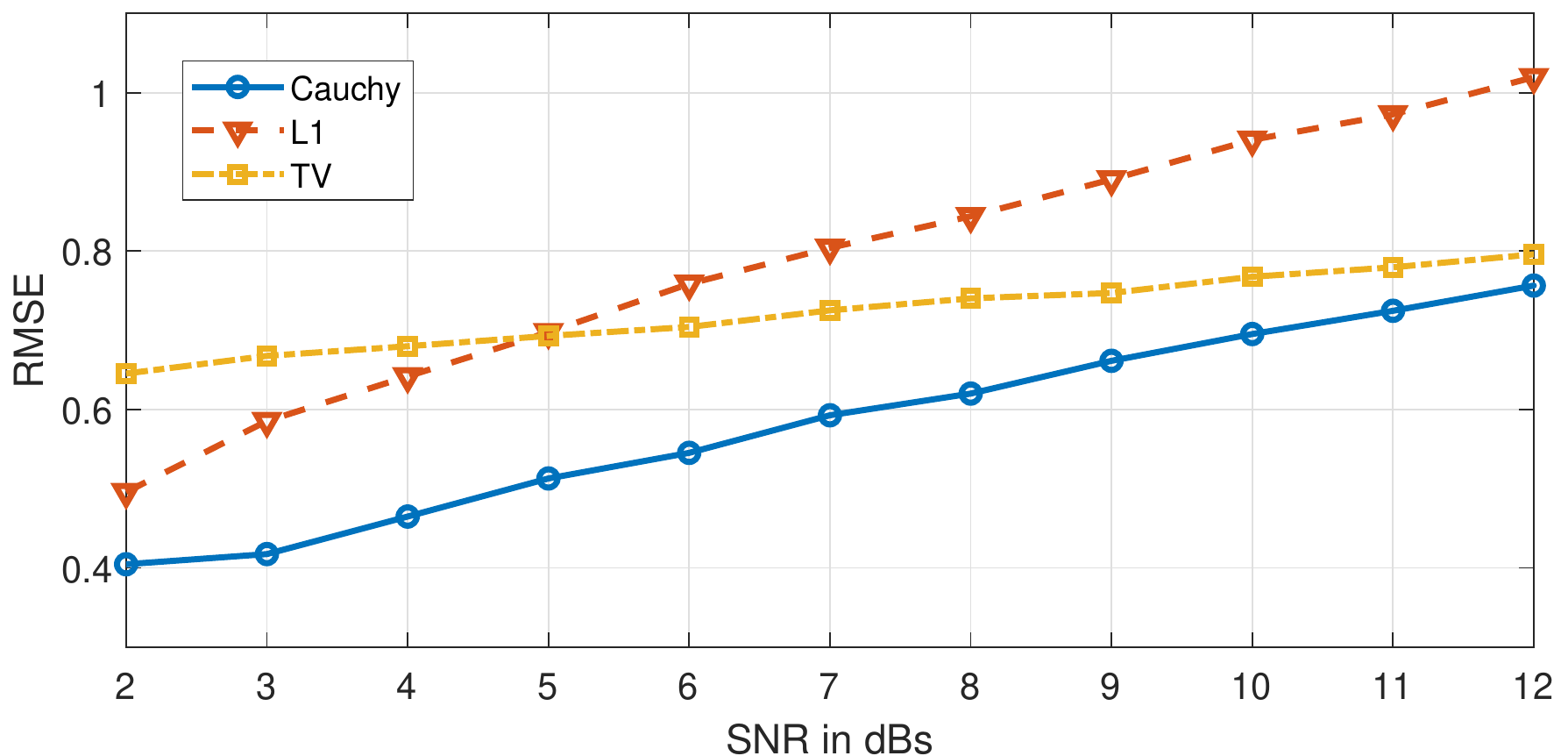}\\
\includegraphics[width=.6\linewidth]{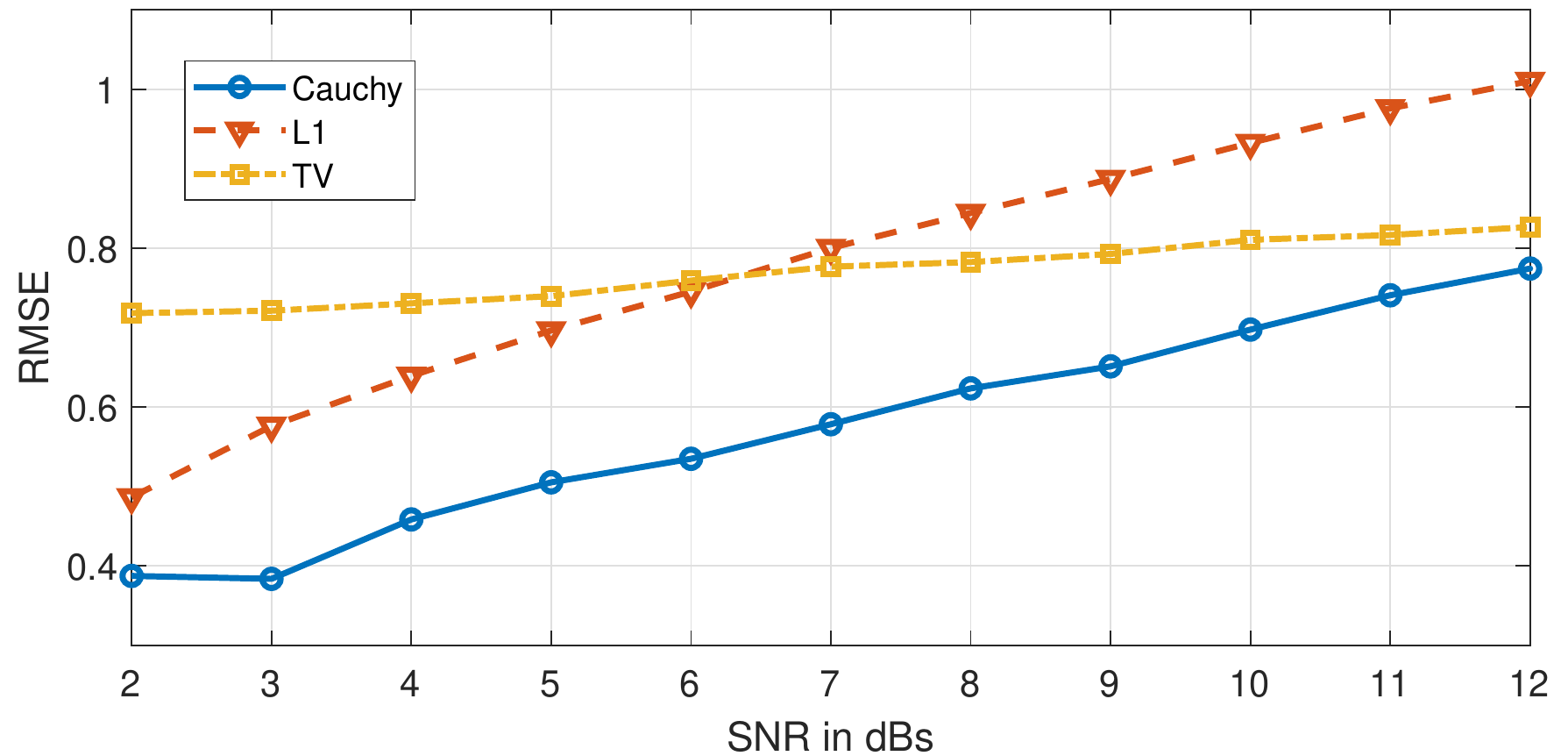}
\caption{Denoising performance in terms of RMSE for various penalty functions. From top to bottom, sub-figures show plots for compression ratios of 0.25 (M = 128, N = 512), 0.125 (M = 256, N = 2048), and 0.0625 (M = 512, N = 8096), respectively.}
\label{fig:rmseComp1D}
\end{figure}

In order to further assess the performance of the proposed Cauchy denoiser, we calculated RMSE and MAE values corresponding to initial SNR values between 2 and 12 dBs. For each noise level, simulations were repeated 100 times and corresponding average RMSE and MAE values for each penalty function, for two example SNR values, are presented in Table \ref{tab:1D}. Figure \ref{fig:rmseComp1D} depicts the performance results in terms of RMSE for the three compression ratio values considered. It can be seen from Table \ref{tab:1D} that the lowest RMSE and MAE values are obtained when employing the Cauchy based penalty function for all compression ratio values. When evaluating Figure \ref{fig:rmseComp1D}, it can be seen that TV denoising performance is close to that of the proposed penalty function when decreasing the noise level, in all three cases. Nevertheless, the Cauchy penalty function leads to the best performance for all SNR values. For visual assessment, Fig. \ref{fig:1D_figures} shows denoising results corresponding to $L_1$, $TV$ and Cauchy based penalty functions for an SNR of 7 dBs. For all the penalty functions tested the denoising effect can be clearly seen but the proposed penalty function leads to the lowest RMSE and MAE.

\begin{figure}[ht!]
\centering
%\subfigure[]{\includegraphics[width=.96\linewidth]{nois_7db-eps-converted-to.pdf}}
%\subfigure[]{\includegraphics[width=.96\linewidth]{l1_7db-eps-converted-to.pdf}}
%\subfigure[]{\includegraphics[width=.96\linewidth]{tv_7db-eps-converted-to.pdf}}
%\subfigure[]{\includegraphics[width=.96\linewidth]{C_7db-eps-converted-to.pdf}}
\includegraphics[width=0.6\linewidth]{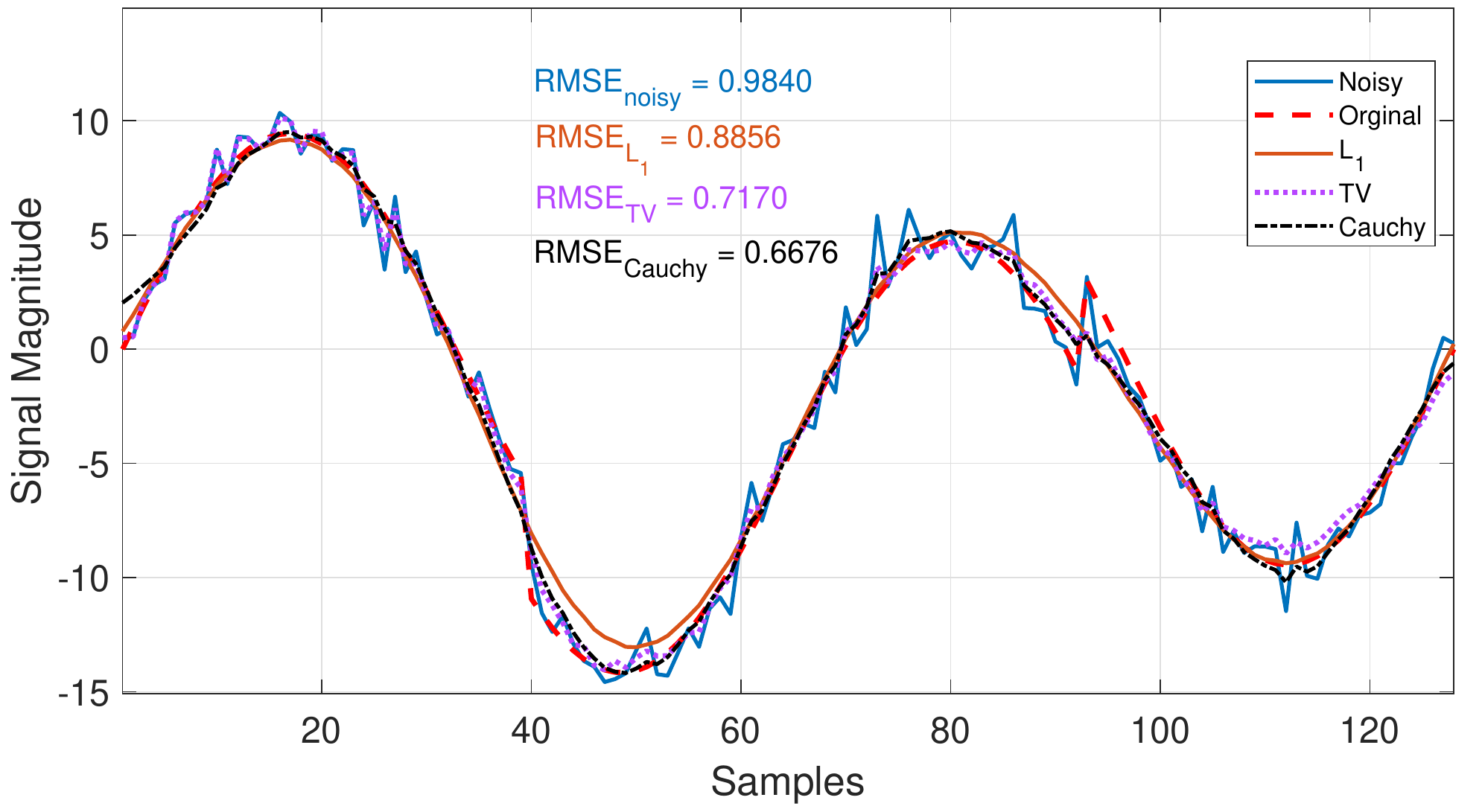}
\caption{Denoising using various penalty functions for a signal of length M = 128 where N is 512 (compression ratio 0.25).}
\label{fig:1D_figures}%\vspace{-0.3cm}
\end{figure}

\subsection{2D Image Reconstruction}
%Following the derivation onto the required condition to ensure the convergence of the proposed proximal splitting methodology to a global minimum,
In the second set of experiments, we investigated the influence of the proposed Cauchy-based regularisation on the
classical 2D image reconstruction tasks of denoising and deblurring. Both problems are ill-posed due to the nature of the measurement noise and unknown blurring-kernels. The literature includes various approaches which can be classified into two main groups, including optimisation-based \cite{parekh2015convex,li2019blind,danielyan2011bm3d} and learning-based \cite{7274732,sun2015learning} techniques. In optimisation based approaches, the sparse characteristics of the object of interest, make sparsity enforcing penalty functions play a leading role.% such as $L_1$ and $TV$ norms.
%} \textcolor{blue}{Specifically, we investigate the performance in two sub-tasks the first of which is discussing the effects of the scale parameter $\gamma$ on the denoising and de-blurring reconstruction results depending on whether the conditions in Theorems \ref{thm:theorem1_new} and \ref{thm:theorem2} are  violated or satisfied. %Note that we first follow \cite{combettes2011proximal} for the selection of the step size $\mu$, then use Theorems \ref{thm:theorem1_new} and \ref{thm:theorem2} to decide the minimum value for $\gamma$ to preserve the convexity.
%In the second sub-task, we deeply investigate the performance of the proposed method in de-blurring example in comparison to the penalty functions of $L_1$ and $TV$ norms for images of natural (cameraman), magnetic resonance (MR) and synthetic aperture radar (SAR). In particular for this example,}

We consider for this set of experiments three types of images, including the standard Cameraman, magnetic resonance (MR) and synthetic aperture radar (SAR). For deblurring, the forward operator $\mathcal{A}$ was selected as a 5$\times$5 Gaussian point spread function (PSF) with standard deviation of 1. The noise is AWGN with blurred-signal to noise ratio (BSNR = $10\log_{10}\{var(\mathcal{A}x)/\sigma^2\}$ (where $\sigma^2$ is the Gaussian noise variance) of 40 dBs. For the denoising example, the forward operator $\mathcal{A}$ is the identity matrix $\mathcal{I}$, the additive noise corresponds to an SNR of 20 decibels.

We start by discussing the effects of the scale parameter, $\gamma$ on the reconstruction results depending on whether the conditions in Theorems 1 and 2 are violated or satisfied. We used the standard cameraman image for benchmarking. The analysis was performed in terms of the peak signal to noise ratio (PSNR) and RMSE. A range of values for $\gamma$ between $10^{-4}$ and $10^{4}$ was set, and the reconstructed images were obtained for each $\gamma$ values by using Algorithm \ref{alg:FB}. The error term $\varepsilon$ was set to $10^{-3}$ whilst the maximum number of iterations $MaxIter$ was set to 250. The step size $\mu$ was set to $\frac{3}{2L}$ for this example.

\begin{figure}[ht!]
\centering
\subfigure[]{\includegraphics[width=.6\linewidth]{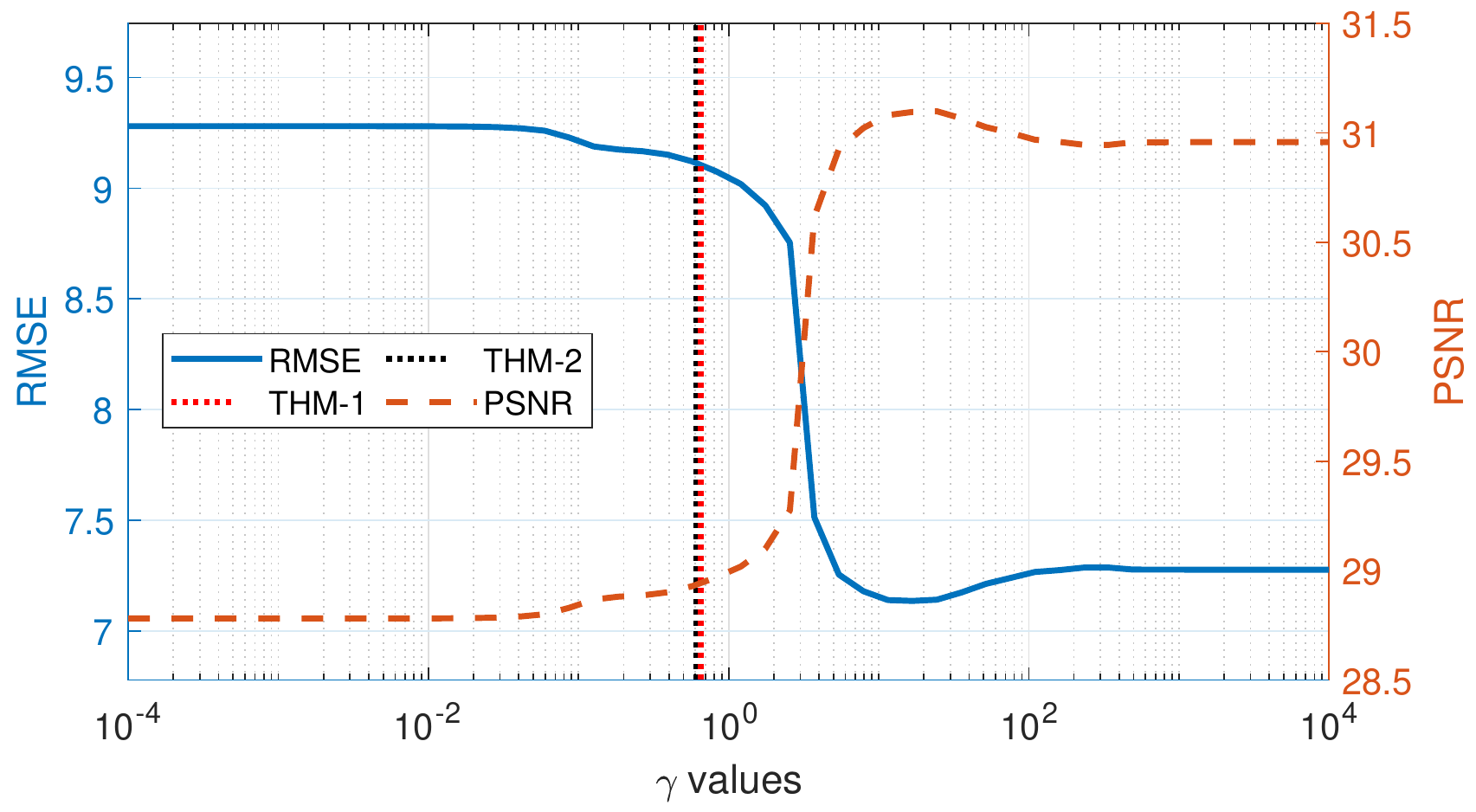}}
%\subfigure[]{\includegraphics[width=.79\linewidth]{gammaComp_converge_sar2H-eps-converted-to.pdf}}
\subfigure[]{\includegraphics[width=.6\linewidth]{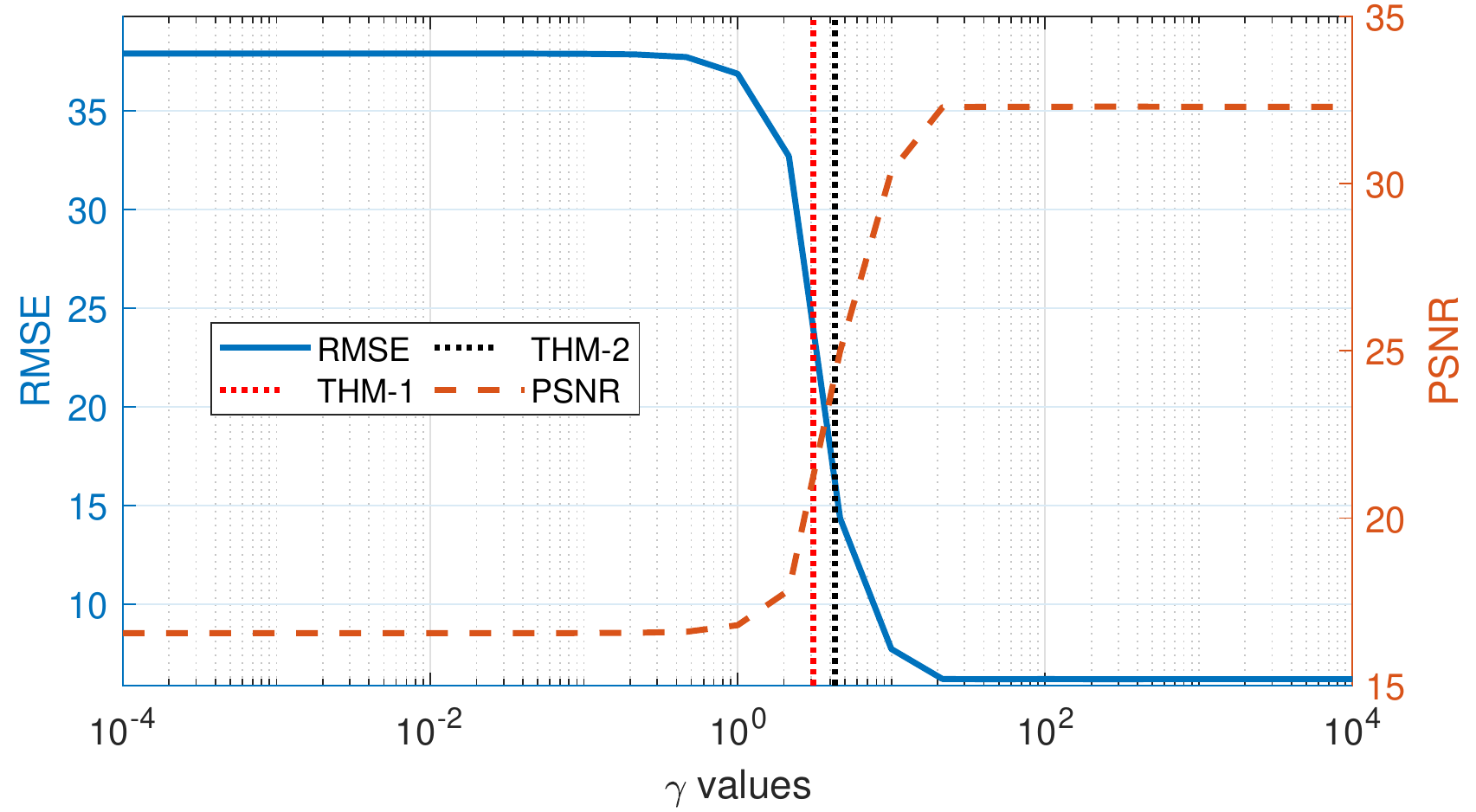}}
%\subfigure[]{\includegraphics[width=.79\linewidth]{gammaComp_converge_sar2D-eps-converted-to.pdf}}
\caption{Effect of the scale parameter $\gamma$ on the reconstructed images for Cameraman. (a) deblurring, (b) denoising.}
\label{fig:gammaComp}%\vspace{-0.3cm}
\end{figure}

Figure \ref{fig:gammaComp} shows the effect of $\gamma$ values on reconstruction results. The left $y$-axes in both sub-figures show RMSE values whilst the right $y$-axes represent the PSNR values for different vaues of $\gamma$ on the $x$-axes. As can clearly be seen from both sub-figures, reconstruction results are poor when the conditions in both Theorem \ref{thm:theorem1_new} and \ref{thm:theorem2} (left sides of the vertical dotted-lines) are violated. However, starting from either conditions and higher values of $\gamma$, we obtained better reconstruction results with an important reconstruction gain around 16dBs for denoising and 2 dBs for deblurring in terms of PSNR. This proves experimentally the correctness of the conditions derived in Theorems \ref{thm:theorem1_new} and \ref{thm:theorem2}. Unlike in the the 1D case, for image reconstruction, we observe a similar performance for higher values of $\gamma$ .% Then, we note a remark here.
%\begin{rem}
We conclude that there is no strict rule for choosing the optimum  value of $\gamma$ but we noticed that the best performance is generally achieved within a specific interval and hence we recommend using $\gamma \in \left[\frac{\sqrt{\mu}}{2}, \frac{50\sqrt{\mu}}{2} \right]$.
%\end{rem}

%\begin{figure}[ht!]
%\centering
%\subfigure[]{\includegraphics[width=.48\linewidth]{cam_org-eps-converted-to.pdf}}
%\centering
%\subfigure[]{\includegraphics[width=.48\linewidth]{cam_gam_cond-eps-converted-to.pdf}}
%\centering
%\subfigure[]{\includegraphics[width=.48\linewidth]{cam_gam_100-eps-converted-to.pdf}}
%\centering
%\subfigure[]{\includegraphics[width=.48\linewidth]{cam_L1-eps-converted-to.pdf}}
%\caption{Denoising reconstruction results. (a) Original. (b) Cauchy with $\gamma = \frac{\sqrt{\mu}}{2}$ (PSNR = 24.096). (c) Cauchy with $\gamma = 100$ (PSNR = 32.262). (d) L1 norm with a regularisation constant of 0.1 (PSNR = 30.979).}
%\label{fig:denois}%\vspace{-0.3cm}
%\end{figure}

\begin{figure*}[htbp]
%\centering
%\subfigure[]{\includegraphics[width=.32\linewidth]{cam_deconv_org-eps-converted-to.pdf}}
%\centering
%\subfigure[]{\includegraphics[width=.32\linewidth]{cam_deconv_blur-eps-converted-to.pdf}}
\centering
\subfigure[]{\includegraphics[width=.24\linewidth]{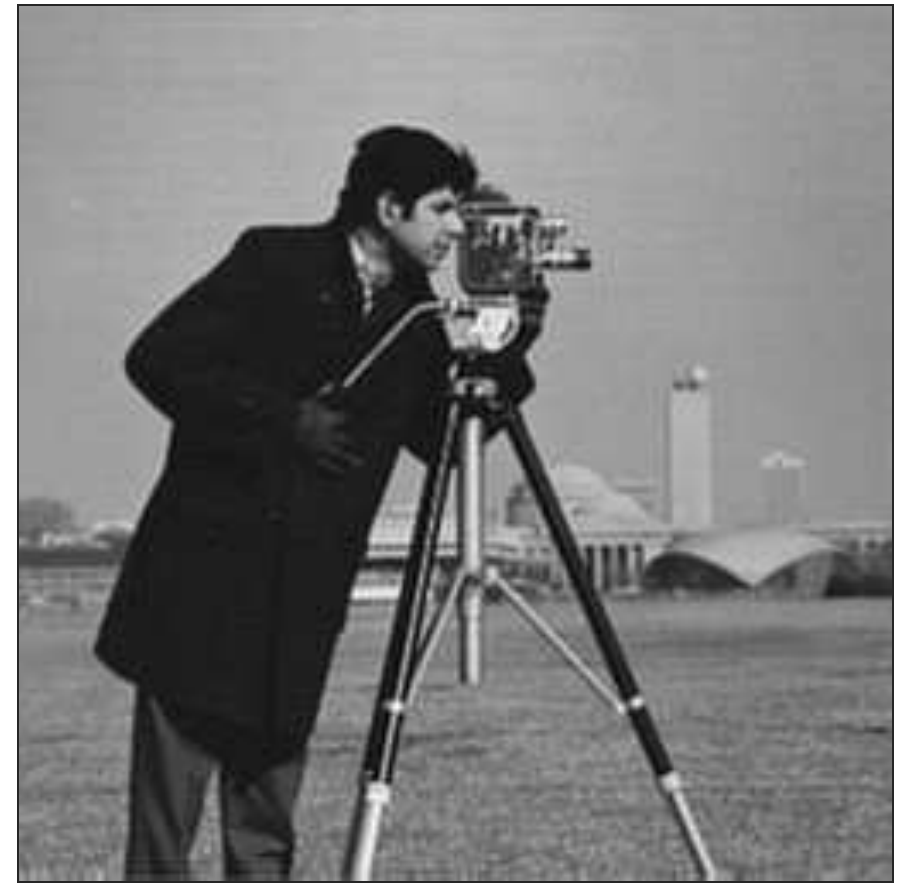}}
\centering
\subfigure[]{\includegraphics[width=.24\linewidth]{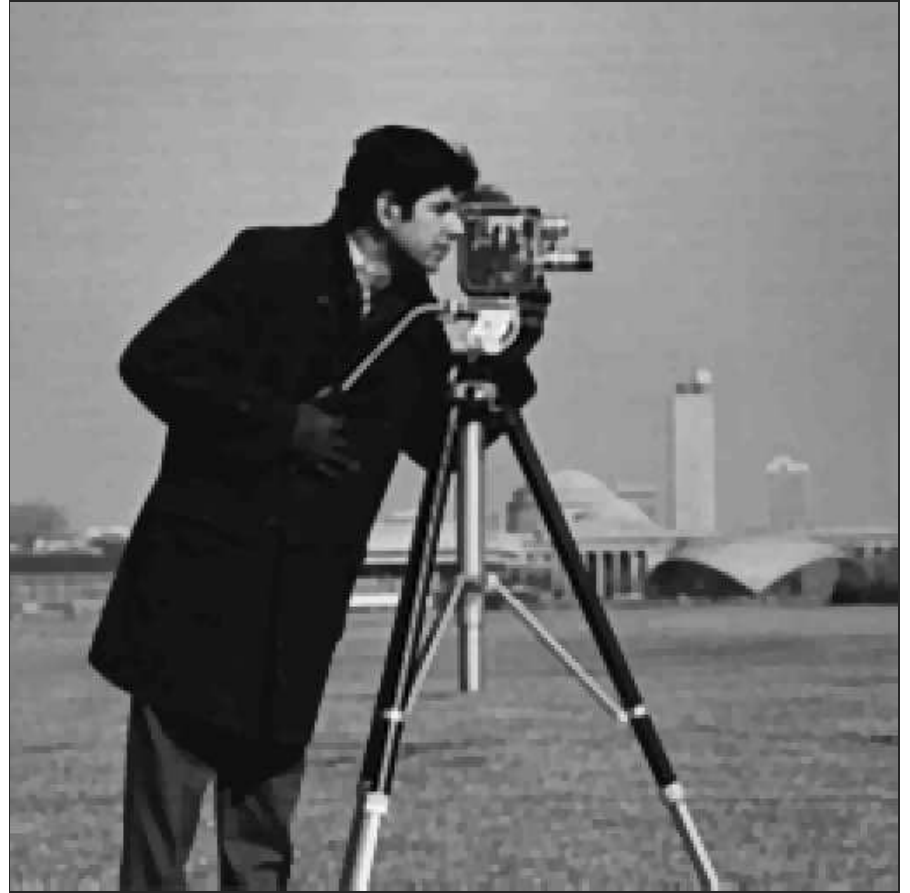}}
\centering
\subfigure[]{\includegraphics[width=.24\linewidth]{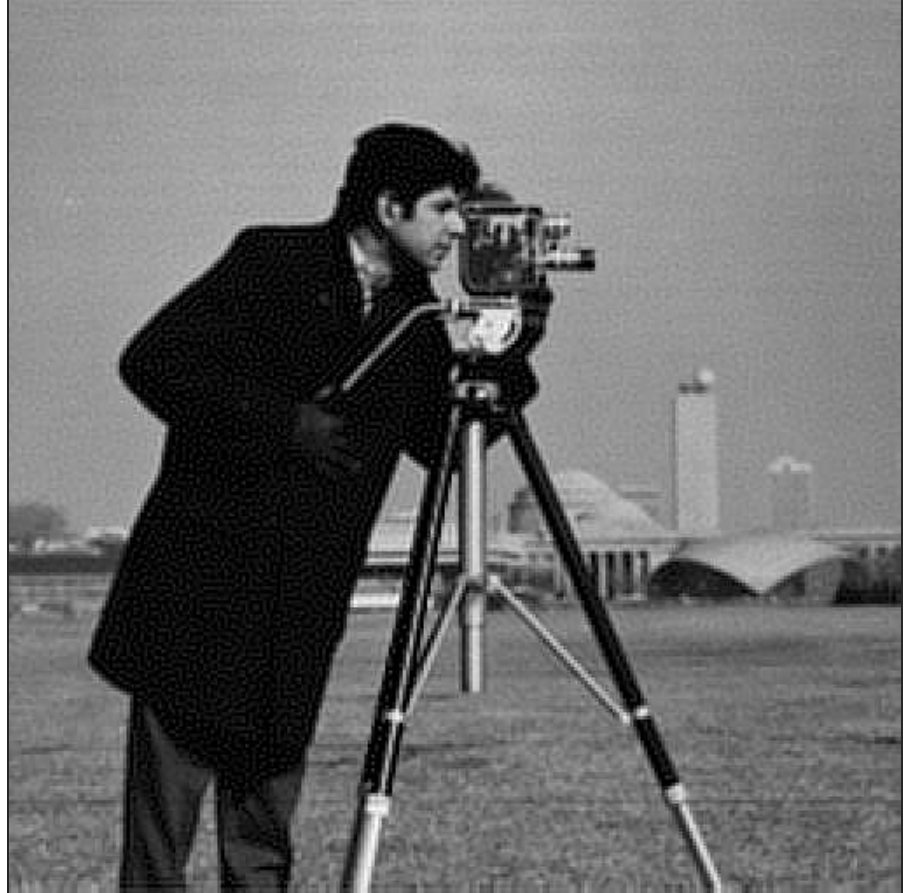}}
\centering
\subfigure[]{\includegraphics[width=.24\linewidth]{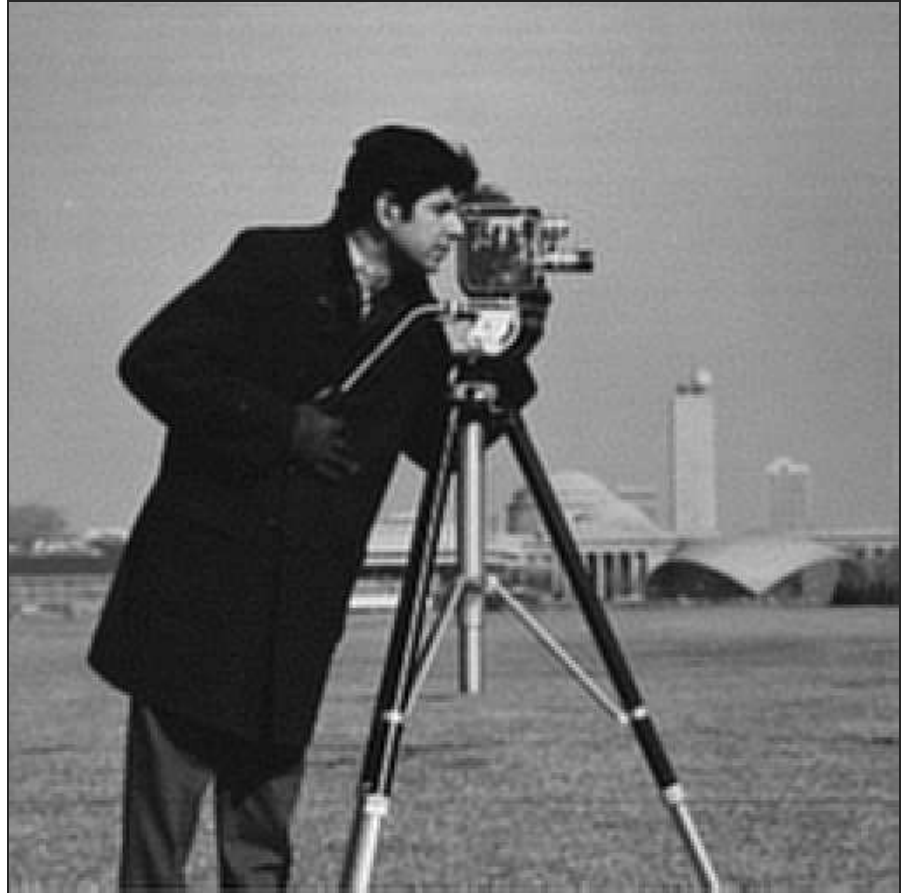}}

\centering
\subfigure[]{\includegraphics[width=.24\linewidth]{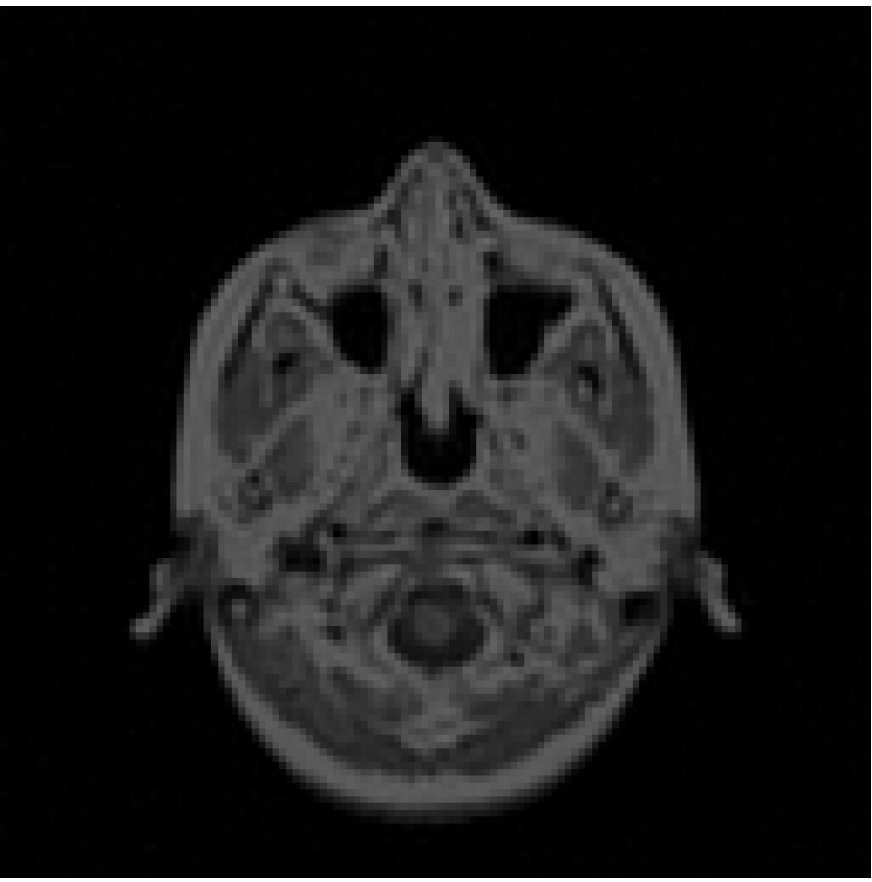}}
\centering
\subfigure[]{\includegraphics[width=.24\linewidth]{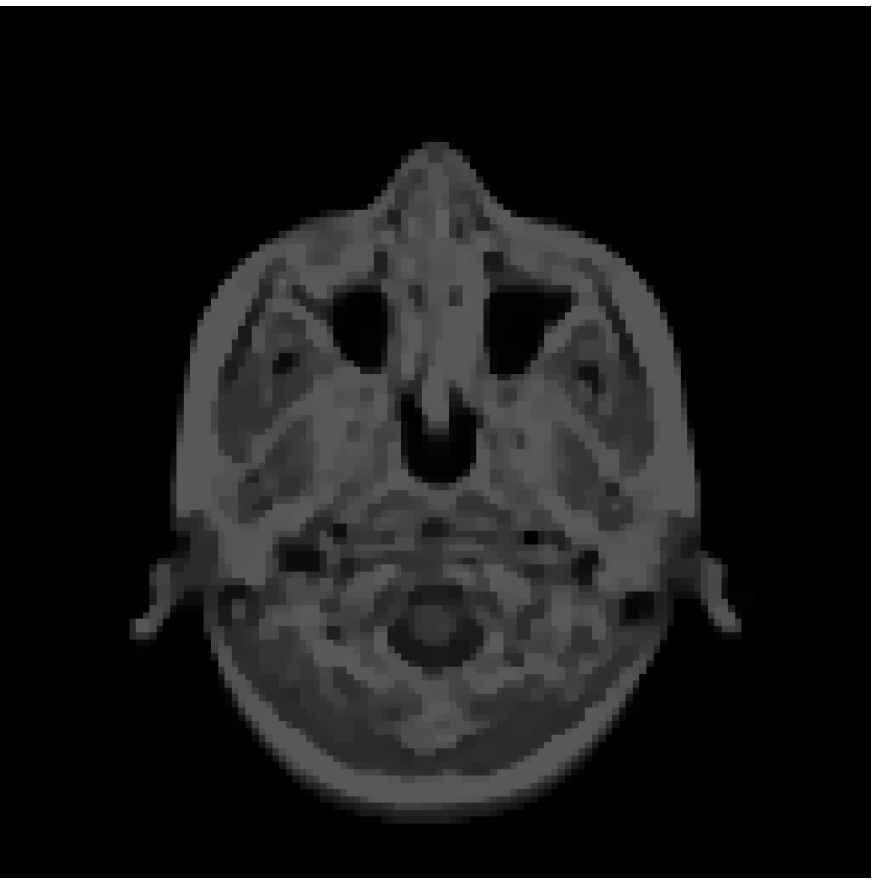}}
\centering
\subfigure[]{\includegraphics[width=.24\linewidth]{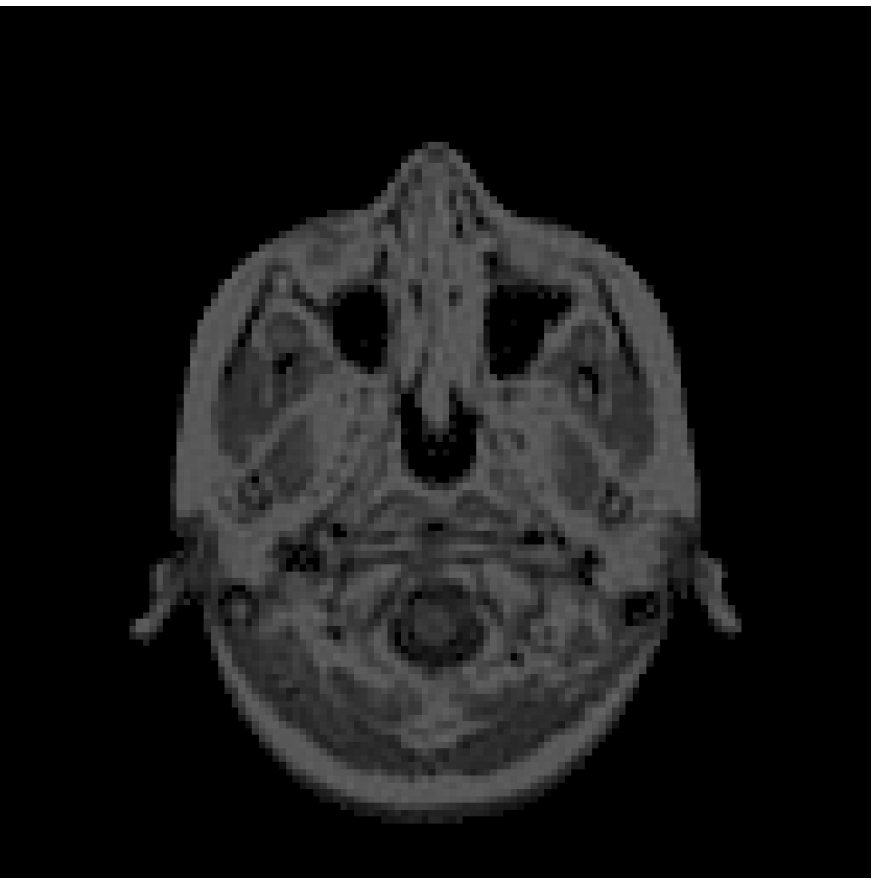}}
\centering
\subfigure[]{\includegraphics[width=.24\linewidth]{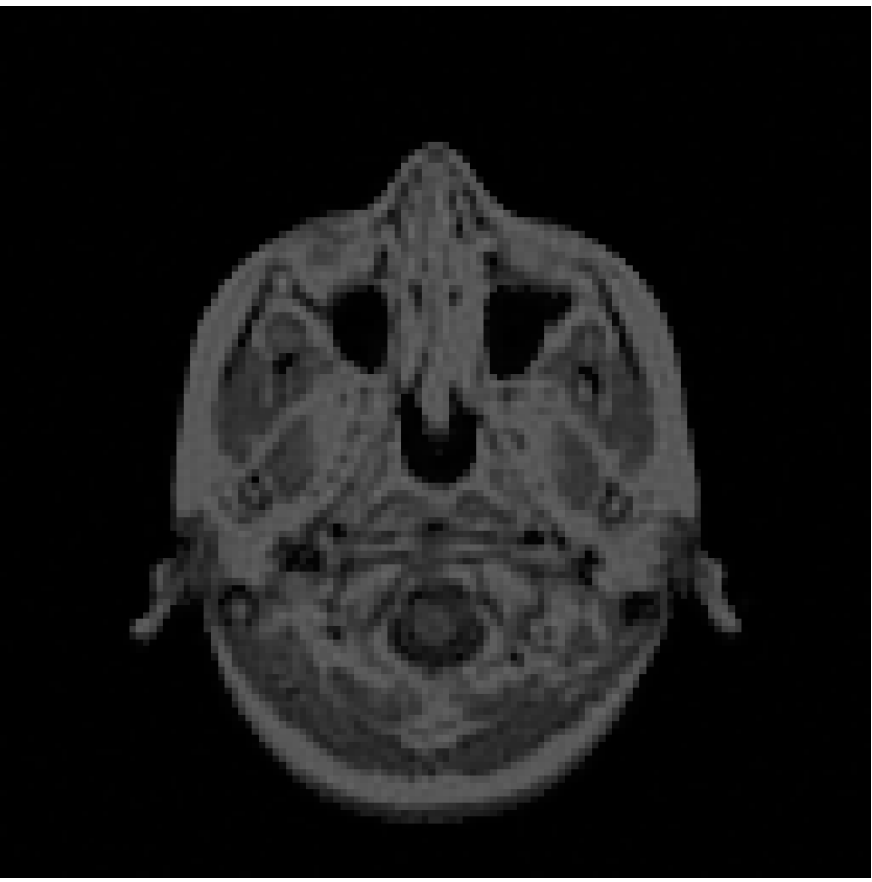}}

\centering
\subfigure[]{\includegraphics[width=.24\linewidth]{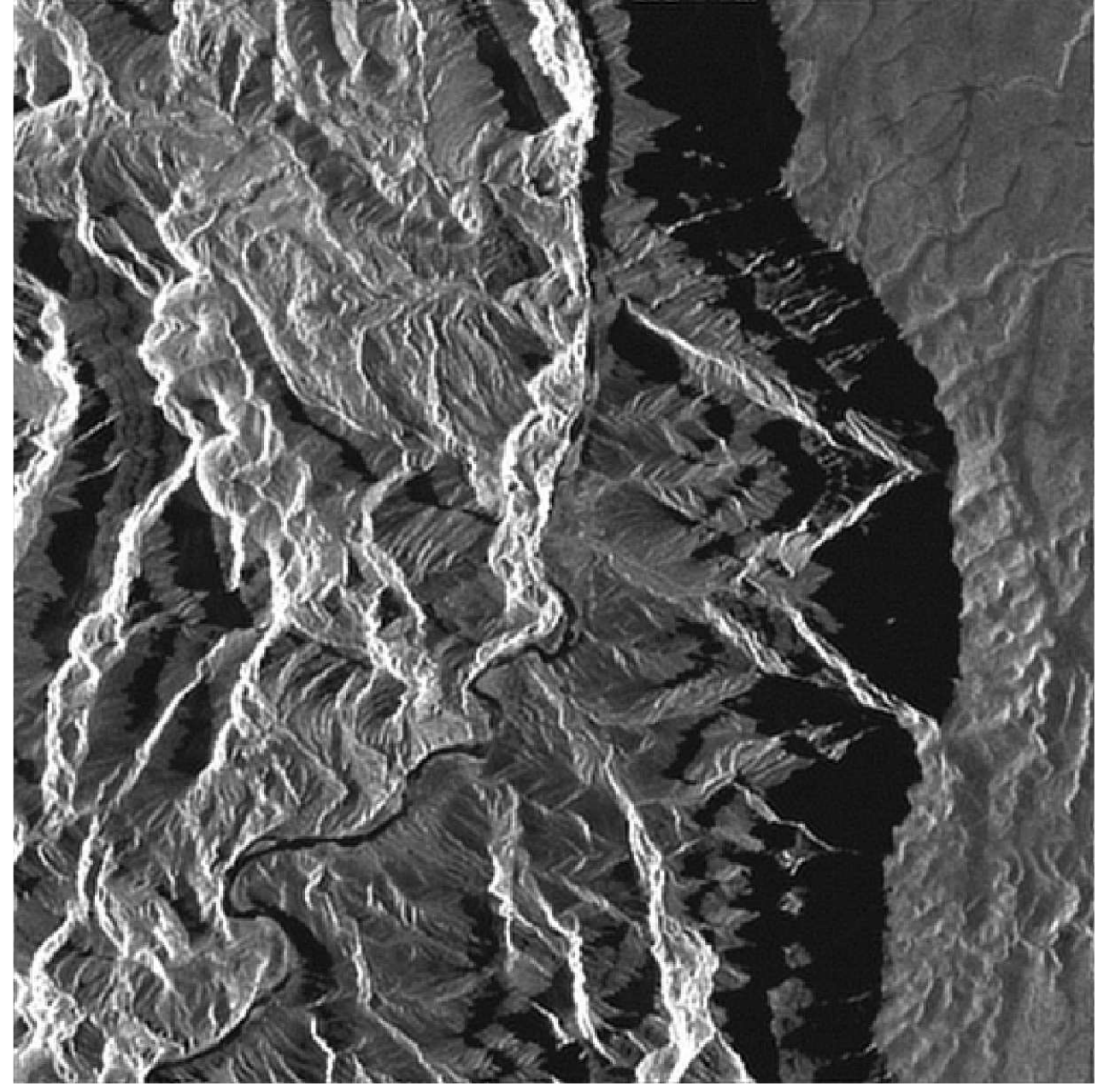}}
\centering
\subfigure[]{\includegraphics[width=.24\linewidth]{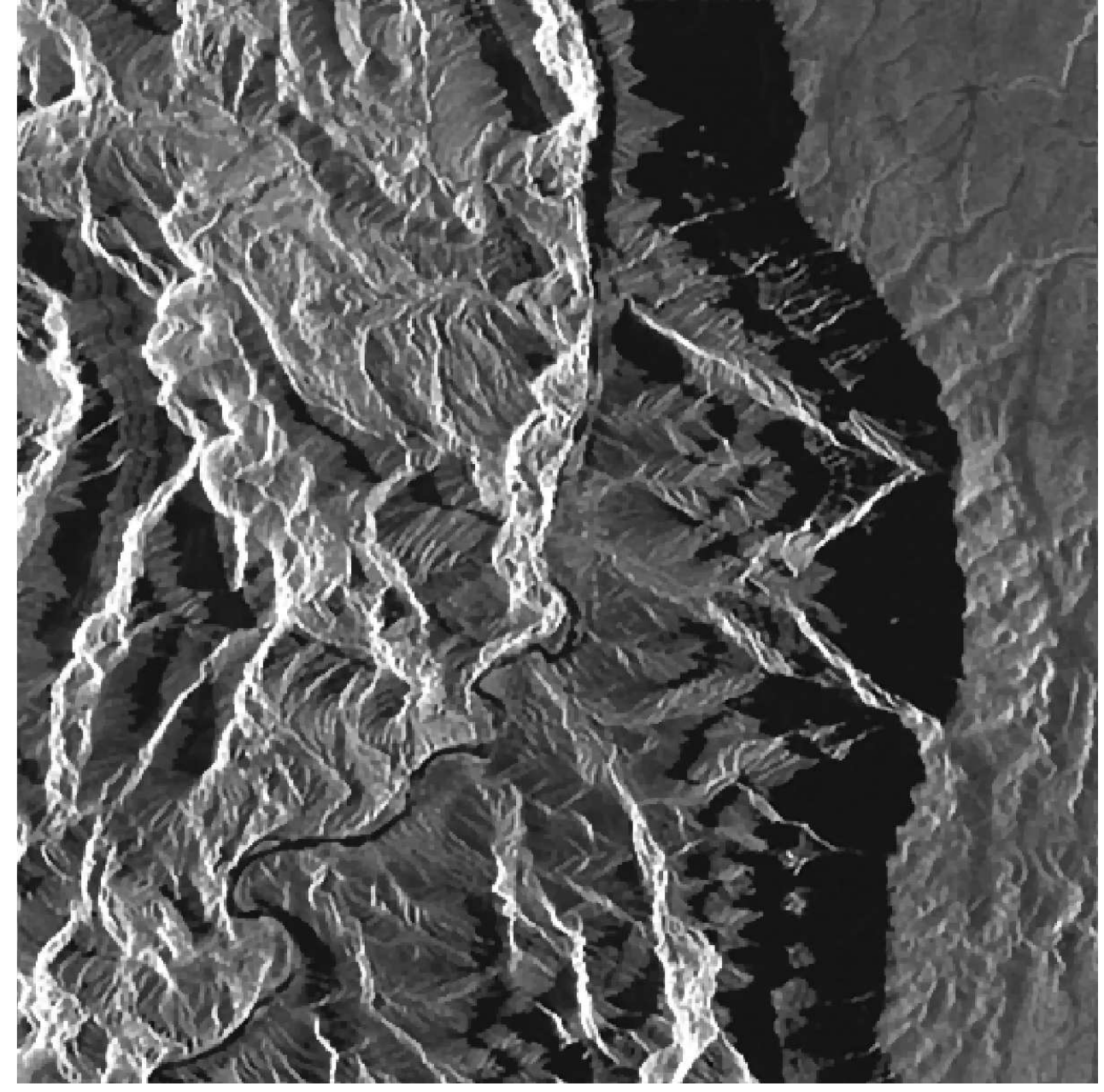}}
\centering
\subfigure[]{\includegraphics[width=.24\linewidth]{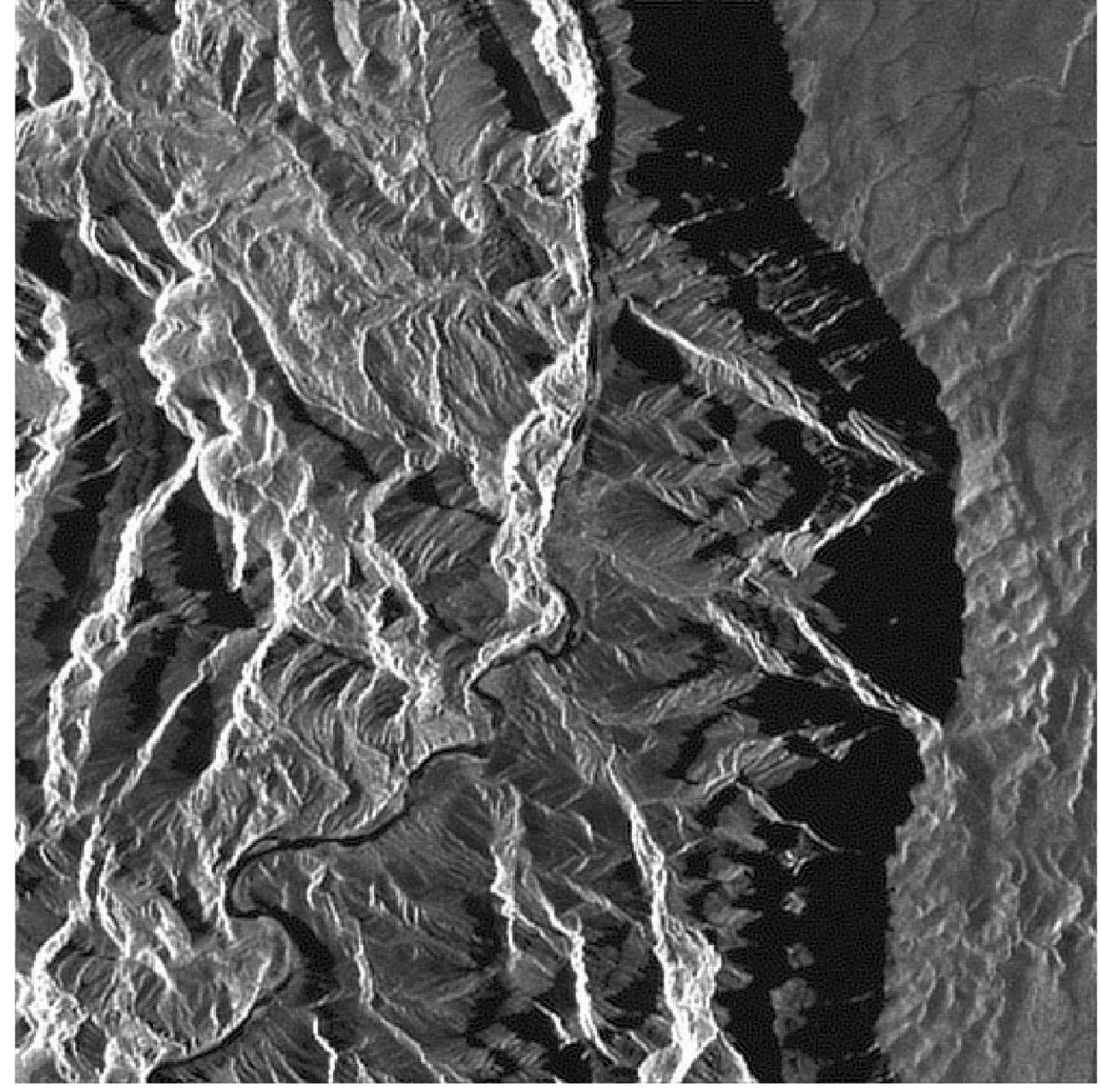}}
\centering
\subfigure[]{\includegraphics[width=.24\linewidth]{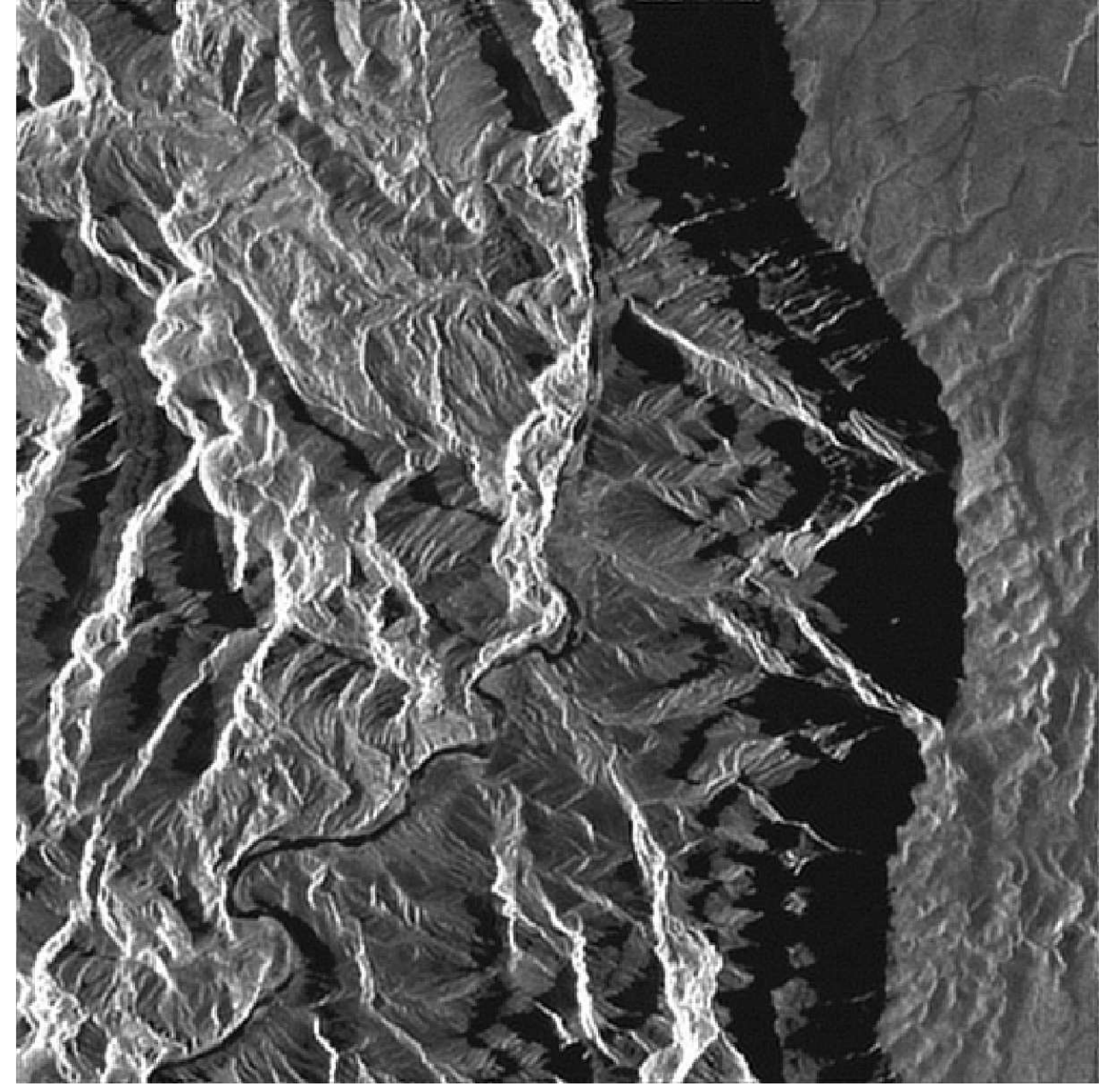}}

\caption{De-blurring results. %(a) Original. (b) Blurry (PSNR = 25.98dB)
First column results ((a), (e) and (i)) refer to L1 norm penalty with a regularisation constant of 0.01. Second column results ((b), (f) and (j)) refer to TV norm penalty with a regularisation constant of 0.1. Third column results ((c), (g) and (k)) refer to Cauchy with $\gamma = \frac{\sqrt{\mu}}{2}$. Fourth column results ((d), (h) and (l)) refer to Cauchy with $\gamma_{opt}$.}
\label{fig:deblur}%\vspace{-0.3cm}
\end{figure*}

Please also note that we do not compare the two conditions proposed in Theorems \ref{thm:theorem1_new} and \ref{thm:theorem2}. They are not antagonistic, but rather conditions that together provide solutions in various situations. Their usage depends on the problem at hand (cf. Remark \ref{rem:remark6}), and both guarantee the convergence in specific circumstances.

\begin{table}[htbp]
  \centering
  \caption{Image de-blurring performance for various images.}
    \begin{tabular}{rlrrr}
    \toprule
          &       & \multicolumn{1}{l}{L1} & \multicolumn{1}{l}{TV} & \multicolumn{1}{l}{Cauchy} \\
          \toprule
    \multicolumn{1}{l}{Cameraman} & PSNR  & 31.07 & 31.03 & \textbf{31.23} \\
          & RMSE  & 7.173 & 7.167 & \textbf{7.035} \\
          & SSIM  & 0.9016 & \textbf{0.9203} & 0.9025 \\
          \hline
    \multicolumn{1}{l}{MRI} & PSNR  & 34.46 & 33.75 & \textbf{35.63} \\
          & RMSE  & 4.875 & 5.240 & \textbf{4.217} \\
          & SSIM  & 0.9547 & 0.9537 & \textbf{0.9707} \\
          \hline
    \multicolumn{1}{l}{SAR} & PSNR  & \textbf{25.78} & 25.19 & \textbf{25.78} \\
          & RMSE  & \textbf{13.174} & 14.026 & \textbf{13.174} \\
          & SSIM  & 0.8897 & 0.8627 & \textbf{0.8898} \\
          \bottomrule
    \end{tabular}%
  \label{tab:deblur}%
\end{table}%

Figure \ref{fig:deblur} depicts de-blurring results for Cauchy, $TV$ and $L1$ norm penalty functions for all three images considered in this set of experiments. For the Cauchy-based reconstruction, we show two separate results corresponding to values of $\gamma$ of $\frac{\sqrt{\mu}}{2}$ (from the convergence condition) and $\gamma_{opt}$, which is the $\gamma$ value with the best performance within the interval of $ \left[\frac{\sqrt{\mu}}{2}, \frac{50\sqrt{\mu}}{2} \right]$. From Figure \ref{fig:deblur} (c) and (k) it can be observed that the Cauchy based penalty function leads to a poor de-blurring performance for the cameraman and SAR images when $\gamma = \frac{\sqrt{\mu}}{2}$. $TV$, $L_1$ and Cauchy-based results are visually similar, but the Cauchy penalty determines the best performance metrics, as shown in Table \ref{tab:deblur}.

%The variable notation in  Algorithm \ref{alg:FB} is based on the generic inverse problem given in (\ref{equ:IP}) for the Cauchy regularisation. For each application discussed in Sections \ref{sec:SR}, \ref{sec:Formation}, \ref{sec:despeckling} and \ref{sec:wake}, the generic variables $y$, $x$ and the forward operator $\mathcal{A}$ in Algorithm \ref{alg:FB} should be substituted with the corresponding variables for the inverse problems given in (\ref{equ:SRformation}), (\ref{equ:SARRecons}), (\ref{equ:wavelet}) and (\ref{equ:wake}) according to the Table \ref{tab:relation}. In addition, the Cauchy proximal operator should be replaced with the corresponding proximal operator according to the chosen regularisation function other than the Cauchy.

\subsection{Error Recovery for MIMO Signal Detection}

For the final part of the experimental analysis in this paper, we investigated an application to error recovery in MIMO signal detection. Due to the sparse nature of error signals in MIMO systems, error recovery has been an important step in estimating the transmitted signals. The state-of-the-art signal detection approaches are the linear zero forcing (ZF) and the minimum mean square error (MMSE) estimators. Variational approaches offer an attractive solution to this problem by promoting the sparse and discrete structure of the error signals. In \cite{7840470}, a maximum-a-posteriori based approach has been proposed for error recovery in massive MIMO systems.

The signal model for a MIMO system can be expressed as $y = Hs + v$, where $s$ is the transmitted signal from $n$ antennas, and $y$ is the received signal at $m$ receiver antennas. The matrix $H$ corresponds to $m\times n$ the flat fading channel matrix the components of which are identically and independent (iid) complex Gaussian random variables. Error recovery models exploit the sparsity of the error signal, where the error directly obtained between the transmitted signal and MMSE estimated signal (hard decision) which is notated as $s_{MMSE}^d$. Then the final signal model can be obtained as \cite{7840470}
\begin{align}
    y - Hs_{MMSE}^d &= H(s - s_{MMSE}^d) + \tilde v\\
    \label{equ:MIMO}y' &= He + \tilde v.
\end{align}
where $e$ is the sparse error vector. By solving (\ref{equ:MIMO}) with the proposed CPS algorithm, we obtain an estimated version of the error $e_{MMSE}^d$, which we then use to recover the transmitted signal via \cite{7840470}
\begin{align}
    s_{Cauchy} = s_{MMSE}^d + e_{MMSE}^d.
\end{align}

For this simulation experiment, we studied the error recovery performance of the proposed method in MIMO systems of 16$\times$16 and 50$\times$50 (an example massive MIMO) for the QPSK and 16QAM modulation schemes, under various noise conditions. We compared our proposed method with the MMSE and ZF detectors and investigated the effect of the convergence condition derived in this paper. The transmitted signal has $10^5$ symbols and we run Monte Carlo simulations of size 100. The performance analysis is presented in Fig. \ref{fig:mimo2}.

When evaluating the scatterplots in Fig. \ref{fig:mimo2} (a) and (d), we can clearly see that the signal estimated via the proposed method corresponds to a good error recovery when compared to the MMSE results presented in \ref{fig:mimo2}-(a). Furthermore, the BER vs. SNR curves presented in the second and third columns show a significant SNR gain compared to the state-of-the-art. Depending on the MIMO parameters and the modulation scheme, the proposed method achieves an SNR gain of around 4 to 9 dBs at a BER value of $10^{-4}$ when compared to the ZF and MMSE estimators. Moreover, we can clearly see that increasing the shape parameter $\gamma$ improves the estimation performance of the proposed method. This result correlates well with the results presented in Figures \ref{fig:gammaComp1D} and \ref{fig:gammaComp} for two previous simulation cases. The best performance is also obtained between $\gamma$ values of $30\sqrt{\mu}/2$ and $50\sqrt{\mu}/2$.

%begin{figure}[ht!]
%\centering
%\subfigure[]{\includegraphics[width=.49\linewidth]{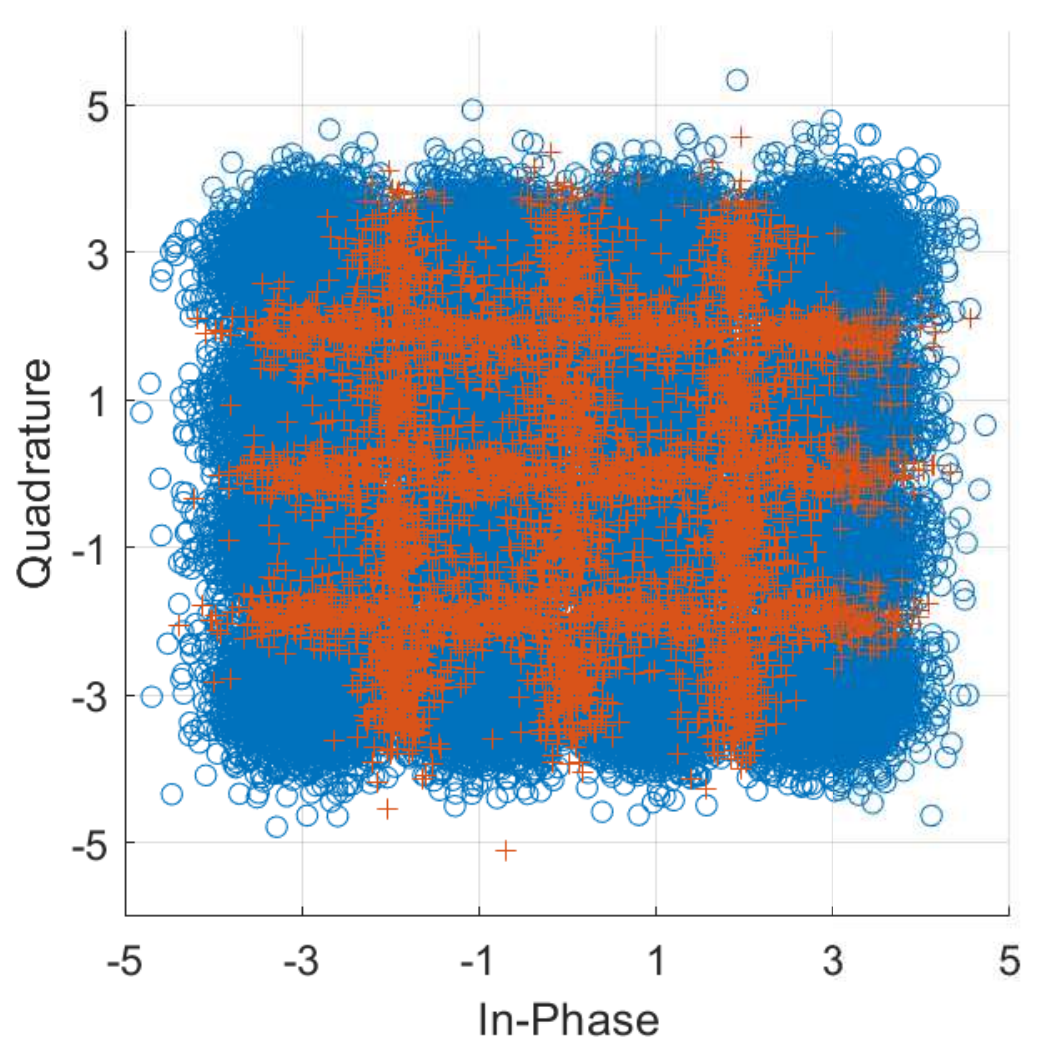}}
%\subfigure[]{\includegraphics[width=.49\linewidth]{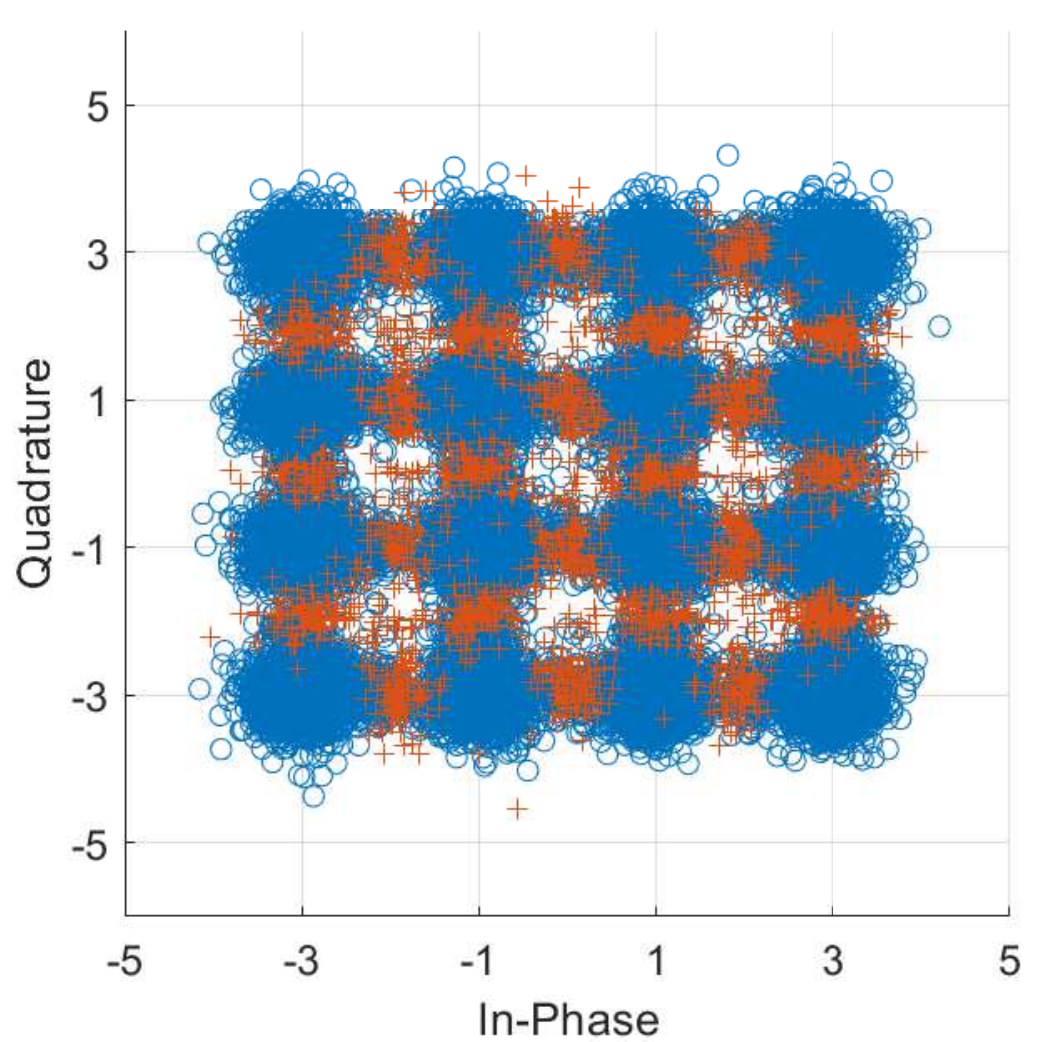}}
%\caption{MIMO system error recovery for a 50$\times$50 MIMO system with 16QAM modulation and 20dB SNR. (a) MMSE. (c) Cauchy ($\gamma = 30\sqrt{\mu}/2$). Red "+" symbols refer to erroneous detections whilst blue "o" symbols are correct detections.}
%\label{fig:mimo1}%\vspace{-0.3cm}
%\end{figure}

\begin{figure*}[htbp]
\centering
\subfigure[]{\includegraphics[width=.25\linewidth]{mimo1_1-eps-converted-to.pdf}}
\subfigure[]{\includegraphics[width=.36\linewidth]{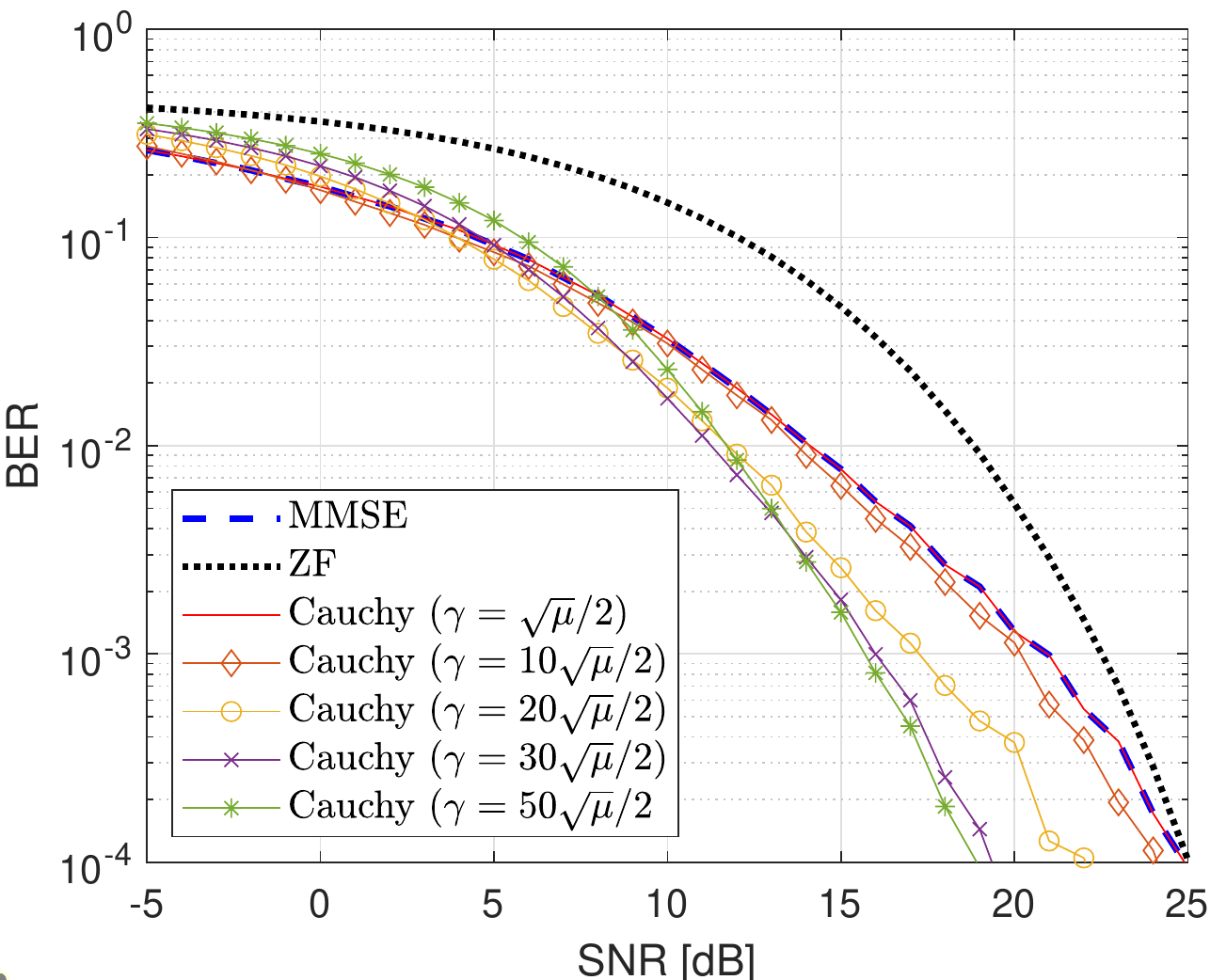}}
\subfigure[]{\includegraphics[width=.36\linewidth]{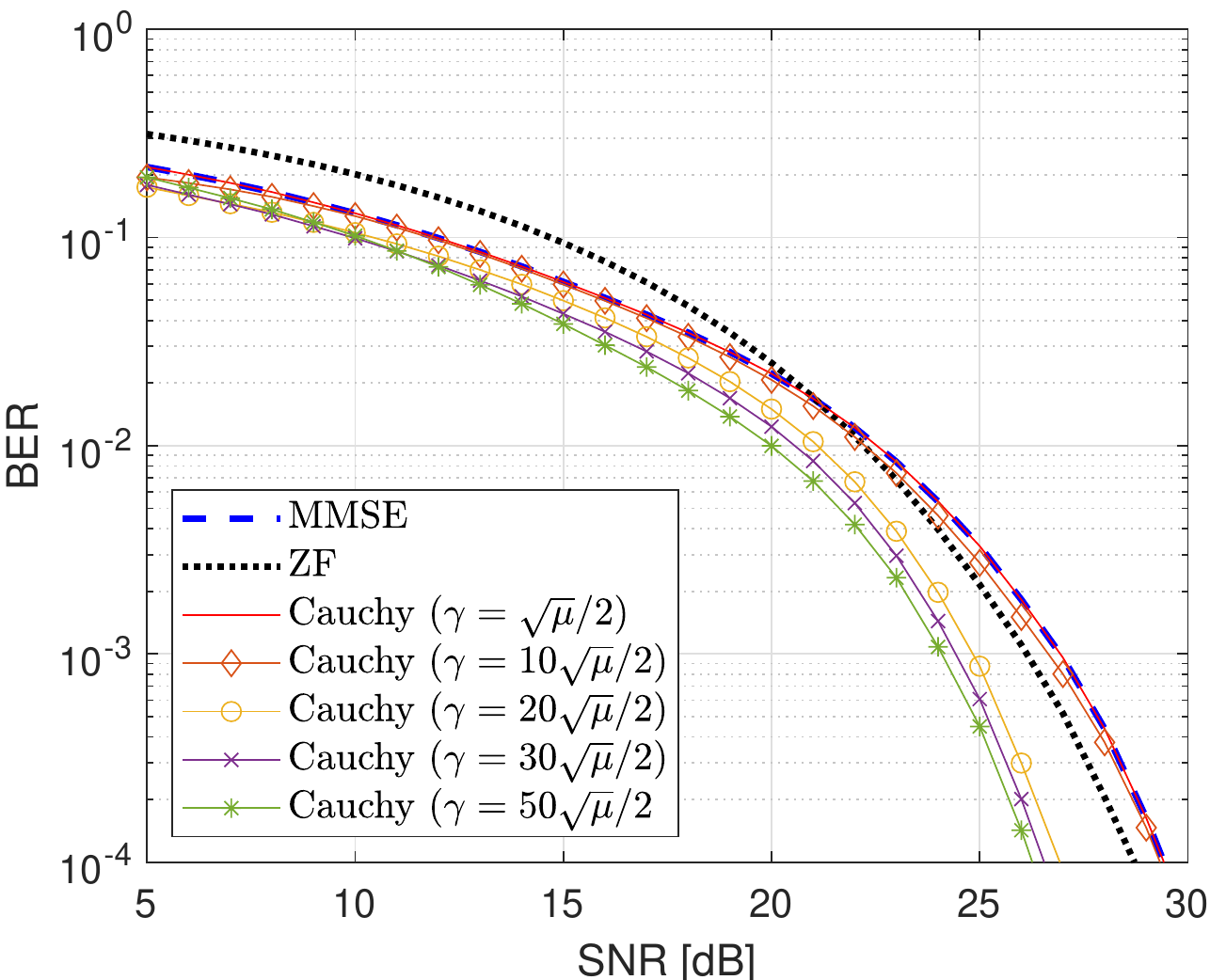}}
\subfigure[]{\includegraphics[width=.25\linewidth]{mimo1_2-eps-converted-to.pdf}}
\subfigure[]{\includegraphics[width=.36\linewidth]{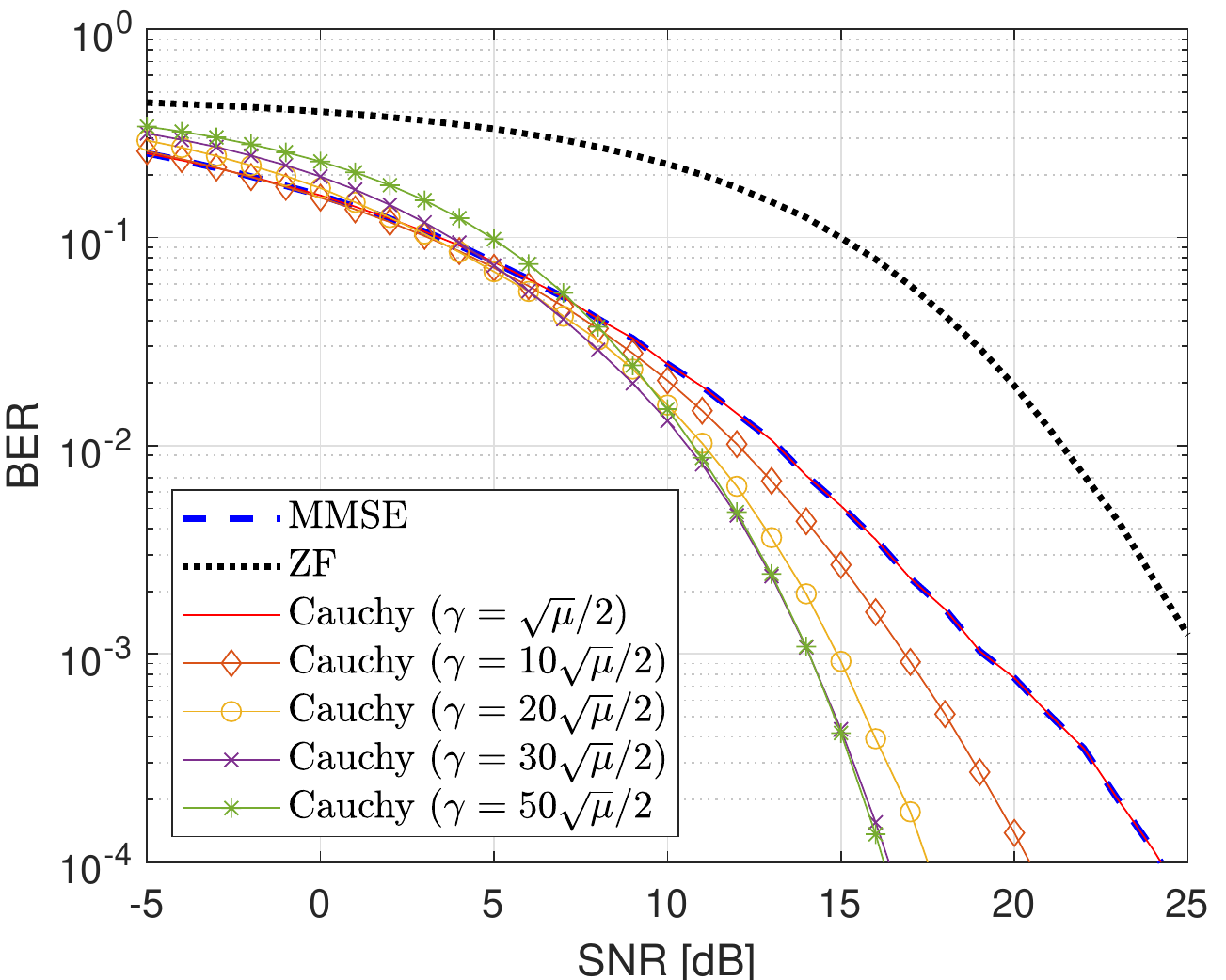}}
\subfigure[]{\includegraphics[width=.36\linewidth]{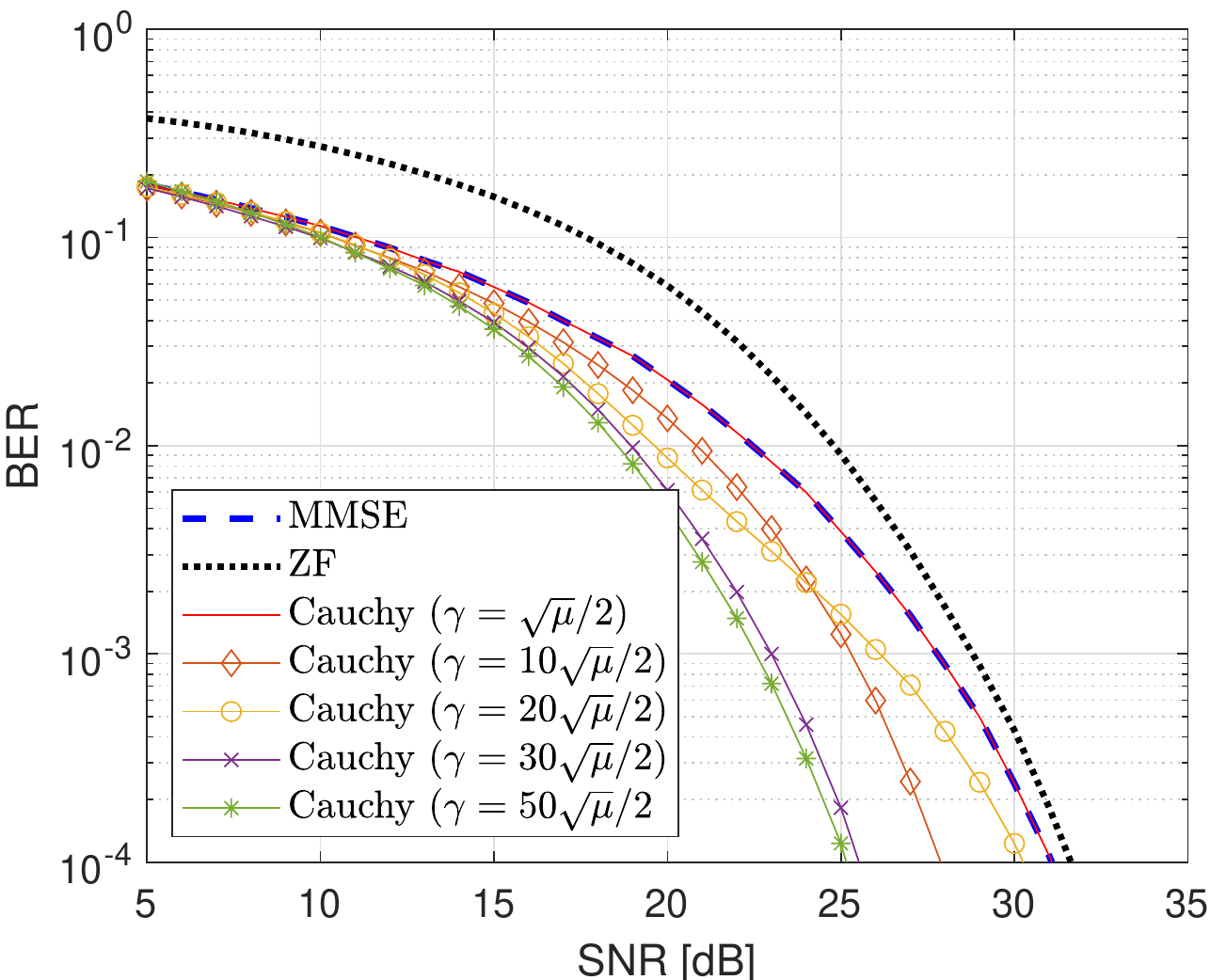}}
\caption{MIMO error recovery performance results. First column refers to scatterplots for a 50$\times$50 MIMO system with 16QAM modulation and 20dB SNR. (a) MMSE. (d) Cauchy ($\gamma = 30\sqrt{\mu}/2$). Red "+" symbols refer to erroneous detections whilst blue "o" symbols are correct detections. Second and third columns are BER vs. SNR graphs for (b) 16$\times$16 - QPSK. (c) 16$\times$16 - 16QAM. (e) 50$\times$50 - QPSK. (f) 50$\times$50 - 16QAM.}
\label{fig:mimo2}%\vspace{-0.3cm}
\end{figure*}

%\subsection{\textcolor{blue}{Further Applications}}
%\textcolor{blue}{As is mentioned before, this paper presents the theoretical aspects of the proposed non-convex Cauchy based penalty function for generic signal processing applications. However, for detailed analysis in more challenging applications, we studied the performance of the proposed penalty function in solving four SAR imaging inverse problems, which are \textit{super-resolution, despeckling, image formation/reconstruction} and \textit{ship wake detection}, and for line artefacts detection in lung ultrasound (LUS) images of COVID-19 patients via the CPS algorithm given in Algorithm \ref{alg:FB}. These detailed applications are discussed in our companion papers \cite{karakucs2019cauchy2,karakucs2020covid19} whilst guaranteeing the convergence of the proximal splitting method given in this paper by following the required conditions given in Theorems \ref{thm:theorem1_new} and \ref{thm:theorem2}.}

\section{Conclusions}\label{sec:conc}
In this paper, we investigated a non-convex penalty function based on the Cauchy distribution. We proposed a FB proximal splitting methodology that employs the Cauchy proximal operator, namely the CPS algorithm. Furthermore, we derived a closed form expression for the Cauchy proximal operator. In order to guarantee the convergence of the proposed proximal splitting algorithm in spite of the non-convexity of the penalty term, we derived a condition relating the Cauchy scale parameter $\gamma$ and the step size parameter $\mu$ of the FB algorithm. Moreover, in special cases where the forward operator is orthogonal ($\mathcal{A}^T\mathcal{A} = \mathcal{I}$), or an overcomplete tight frame ($\mathcal{A}^T\mathcal{A} = r\mathcal{I}$) with $r>0$, we derived another condition for convexity that is independent on the proximal splitting algorithm employed.

In order to demonstrate the effectiveness of the proposed penalty function, we tested its performance in denoising, de-convolution and error recovery examples in comparison to the state-of-the-art methods, and $L_1$ and $TV$ norm penalty functions. The Cauchy based penalty achieved better reconstruction results compared to both penalty functions. We further showed the effect of violating the proposed condition in both examples. We concluded that the best parameter set always lays in the correct side of the derived critical value (i.e. $\gamma \geq \frac{\sqrt{\mu}}{2}$).

Our current work is focused on investigating the use of the Cauchy proximal operator in representation learning via convolutional sparse coding and will be reported in a future communication.
In addition, the existence of a closed-form expression for the Cauchy proximal operator makes is suitable for advanced Bayesian inferences, such as uncertainty quantification, e.g. via $p$-MCMC methods, which is another of our current endeavours. Finally, since other statistical models could lead to equally valid penalty functions and corresponding proximal operators, in our current work we are also investigating such alternatives, including the hyper-Laplacian, Pareto, and the Student's t-distribution.

\bibliographystyle{IEEEtran}
\bibliography{Cauchy_SR}

\end{document}